\definecolor{darkgreen}{rgb}{0,0.5,0}
\newcommand{\comment}[1]
\newcommand\myfunc[5]{%
	\begingroup
	\setlength\arraycolsep{0pt}
	#1\colon\begin{array}[t]{c >{{}}c<{{}} c}
		#2 & \to & #3 \\ #4 & \mapsto & #5 
	\end{array}%
	\endgroup}
\newcommand{\RNum}[1]{\uppercase\expandafter{\romannumeral #1\relax}}
\title{ On sequential structures in\\ incompressible multidimensional networks\footnote{In~\cite{Abrahao2019bams}, a preliminary version of part of this paper is presented as an extended abstract.} }
\author{Felipe S. Abrah\~{a}o}
\address[Felipe S. Abrah\~{a}o, Klaus Wehmuth, Artur Ziviani]{National Laboratory for Scientific Computing (LNCC), Petr\'{o}polis, Brazil}
\address[Felipe S. Abrah\~{a}o]{Centre for Logic, Epistemology and the History of Science (CLE), University of Campinas (UNICAMP), Brazil}
\address[Felipe S. Abrah\~{a}o, Hector Zenil]{Oxford Immune Algorithmics, Oxford University Innovation, Oxford, U.K.}
\email{fabrahao@unicamp.br}
\author[Klaus Wehmuth]{Klaus Wehmuth} 
\email{klaus@lncc.br}
\author{Hector Zenil}  
\address[Hector Zenil]{Algorithmic Dynamics Lab, School of Biomedical Engineering and Imaging Sciences \& King's Institute for Artificial Intelligence, King's College London, U.K.. 
The Alan Turing Institute, British Library, London, U.K..
}
\email{hector.zenil@kcl.ac.uk}
\author[Artur Ziviani]{Artur Ziviani}
\email{ziviani@lncc.br}
\begin{document}

\begin{abstract}\label{abstract}
	In order to deal with multidimensional structure representations of real-world networks, as well as with their worst-case irreducible information content analysis, the demand for new graph abstractions increases. This article investigates incompressible multidimensional networks defined by generalized graph representations. In particular, we mathematically study the lossless incompressibility of snapshot-dynamic networks and multiplex networks in comparison to the lossless incompressibility of more general forms of dynamic networks and multilayer networks, from which snapshot-dynamic networks or multiplex networks are particular cases. Our theoretical investigation first explores fundamental and basic conditions for connecting the sequential growth of information with sequential interdimensional structures such as time in dynamic networks, and secondly it presents open problems demanding future investigation. Although there may be a dissonance between sequential information dynamics and sequential topology in the general case, we demonstrate that incompressibility (or algorithmic randomness) dissolves it, preventing both the algorithmic dynamics and the interdimensional structure of multidimensional networks from displaying a snapshot-like behavior (as characterized by any arbitrary mathematical theory). Thus, beyond methods based on statistics or probability as traditionally seen in random graphs and complex networks models, representational incompressibility implies a necessary underlying constraint in the multidimensional network topology. We argue that the study of how isomorphic transformations and their respective algorithmic information distortions can characterize sequential interdimensional structures in (multidimensional) networks helps the analysis of network topological properties while being agnostic to the chosen theory, algorithm, computation model, and programming language.

\end{abstract}

\keywords{
Algorithmic information,
Complex networks,
Lossless compression,
Multidimensional systems,
Network topology
%
}	


\subjclass[2020]{
	05C82; 
	68Q30; 
	03D32; 
	05C80; 
	68P30; 
	94A29; 
	68R10; 
	05C60; 
	05C75; 
	94A16; 
	68T09; 
%
}

\maketitle

\newtheorem{lemma}{Lemma}[section]

\newtheorem{theorem}[lemma]{Theorem}

\newtheorem{corollary}[lemma]{Corollary}

\theoremstyle{definition}
\newtheorem{definition}{Definition}[section]
\newtheorem{notation}[definition]{Notation}

\theoremstyle{remark}
\newtheorem{note}{Note}[section]

\pagebreak
\pagestyle{style1}

\section{Introduction}\label{sectionIntro}

Complex networks science has been showing fruitful applications to the study of biological, social, and physical systems \cite{Barabasi2009a,Lewis2009,Zarate2019nat}.
This way, as the interest and pervasiveness of complex networks modeling and network analysis increase in graph theory and network science, proper representations of multidimensional networks into new extensions of graph-theoretical abstractions has become an important topic of investigation \cite{Kivela2014,Lambiotte2019,Michail2018}.
In a general sense, a multidimensional network (also known as high-order network) is any network that has additional representational structures.
For example, this is the case of dynamic  (i.e., time-varying) networks \cite{Rossetti2018b,Michail2015,Pan2011,Costa2015a}, multilayer networks \cite{Kivela2014,Boccaletti2014a,Domenico2013}, and dynamic multilayer networks \cite{Wehmuth2016b,Wehmuth2018a,Wehmuth2018}. 
In this regard, the general scope of this paper is to 
study the lossless incompressibility of multidimensional networks
as well as network topological properties in such generalized graph representations.

Unlike traditional methods from random graphs theory, such as the probabilistic method, and from data compression algorithm analysis, which is dependent on the chosen programming language, we present a theoretical investigation of algorithmic complexity and algorithmic randomness.
First, this enables worst-case compressibility and network complexity analyses that do not depend on the choice of programming language \cite{Zenil2018a,Morzy2017a}.
Secondly, it enables us to achieve mathematical proofs of the existence of certain network topological properties in generalized graphs, which are not necessarily generated or defined by stochastic processes \cite{Buhrman1999,Li1997}, and thus necessarily occur due to phase transitions at the asymptotic limit as the network size increases.

With this purpose, algorithmic information theory \cite{Li1997,Calude2002,Downey2010} has been giving us computably universal tools for studying data compression of individual objects \cite{Barmpalias2019,Sow2003,Zenil2018,Delahaye2012,Li1997}.
On the one hand, first note that the source coding theorem \cite{Cover2005} in classical information theory ensures that, as $ \left| V(G) \right| = n \to \infty $, every recursively labeled representation of a random graph $ G $ on $n$ vertices and edge probability $ p = 1/2 $ in the classical Erdős–Rényi model $ \mathcal{G}(n,p) $---from an independent and identically distributed stochastic process---is expected to be losslessly incompressible with probability arbitrarily close to one \cite{Li1997,Leung-Yan-Cheong1978}.
In this sense, as shown in \cite{Dehmer2011,Mowshowitz2012,Zenil2018a,Lewis2009}, approaches to network complexity based on classical (or statistical) information theory have presented useful tools to find, estimate, or measure underlying graph-theoretic topological properties in random graphs or in complex networks. 

On the other hand, for some graphs $G$ (with $n$ vertices) displaying maximal degree-sequence entropy~\cite{Zenil2017a} or exhibiting a Borel-normal distribution of presence or absence of edges~\cite{Becher2002,Becher2015,Becher2013}, the edge set $ E( G ) $ is computable and, therefore, is algorithmically compressible on a logarithmic order \cite{Li1997,Downey2010,Calude2002}.
That is, even though their inner structures might seem to be statistically homogeneous or to be following a uniform probability distribution, these graphs $G$ may be compressed into $ \mathbf{O}\left( \log(n) \right) $ bits and thus are very far from being incompressible objects.
Moreover, it has been shown that statistical estimations are not invariant to language description in general \cite{Zenil2018a}.
This comes from the reliance on probability distributions that require making a choice of feature of interest relevant to the measure of interest. 
For example, the study of the distribution of node degrees disregards other representations  of the same object (network) and its possible underlying generating mechanism. 
While some of these statistical approaches may be useful when a feature of interest is selected they can only capture the complexity of the representation and not the object. This is in contrast to universal measures of randomness such as algorithmic complexity whose invariance theorem guarantees that different representation will have convergent values \cite{Li1997,Downey2010,Calude2002} (up to a constant that only depends on the choice of the universal programming languages). 
Of course, the complication is how to achieve the estimation of those universal measures which by virtue of being universal are also semi-computable and their application require, therefore, a much higher degree of methodological care compared to those measures that are simply computable such as those based on traditional statistical approaches, e.g., entropy, or graph-theoretic approaches, e.g., node degree.
In this article, we apply the results on computable labeling and algorithmic randomness introduced in \cite{Abrahao2018darxivandreport,Buhrman1999,Zenil2018a,Khoussainov2014}.
In particular, we extend these ones for string-based representations of classical graphs to string-based representations of multiaspect graphs (MAGs).
MAGs are formal representations of dyadic (or $2$-place) relations between two arbitrary $n$-ary tuples \cite{Wehmuth2017,Wehmuth2016b} and have shown fruitful representational properties to network modelling \cite{Wehmuth2015a,Wehmuth2016b,Abrahao2018publishedAMS,Abrahao2017published}, to perform analysis of multidimensional networks \cite{Wehmuth2018a,Costa2015a,Wehmuth2018b,Wehmuth2018}, such as dynamic networks and multilayer networks, and to investigate algorithmic information distortions due to isomorphisms~\cite{Abrahao2020cAIDistortionsCN,Abrahao2021AIDistortionsEntropy}.

First, we compare the algorithmic complexity and incompressibility of snapshot-dynamic networks and multiplex networks with more general forms of dynamic networks and multilayer networks, respectively.
In turn, dynamic networks and multilayer networks are considered as distinct types of multidimensional (or high-order) networks.
Secondly, we demonstrate some multidimensional topological properties of incompressible general multidimensional networks.
To tackle these problems in the present paper, we apply a theoretical approach by putting forward definitions, lemmas, theorems, and corollaries.

In Section~\ref{sectionSnapshot}, we investigate the algorithmic randomness of snapshot-dynamic networks and multiplex networks through the calculation of the worst-case lossless compression of the characteristic string of the network.
This way, one can compare these two kinds of networks with more general forms of dynamic networks and multilayer networks, respectively.
%
We demonstrate that the presence of multidimensional network topological properties that may not correspond to underlying structural constraints in real-world networks, e.g., of snapshot-dynamic networks. 
In particular, we show the presence of transtemporal or crosslayer edges (i.e., edges linking vertices at non-sequential time instants or layers).

\section{Snapshot-like multidimensional networks}\label{sectionSnapshot}

This section presents a theoretical investigation of the consequences of the results in Appendix~\ref{subsectionRandomMAGs} to some of the common representations of dynamic networks and multilayer networks.
As we will explain and formalize in Section~\ref{subsectionMultiplex}, we choose a differential approach to the dynamic and multilayer case, so that both become particular cases of general multidimensional networks while keeping their own distinct physical interpretation of what each `dimension' (or aspect) \cite{Wehmuth2017,Wehmuth2016b,Wehmuth2018b} represents.
In this way, we first present the investigation of the algorithmic complexity of snapshot-based representations of dynamic networks.
Then, in Section~\ref{subsectionMultiplex}, we introduce the same kind of investigation for multiplex networks.

\subsection{Snapshot-dynamic networks}\label{subsectionSnapshotDynamic}

In the context of real-world complex networks analysis, one may highlight some important representation models of dynamic networks, such as, time-varying graphs (TVGs) \cite{Wehmuth2015a,Costa2015a,Wehmuth2016b,Wehmuth2017}, temporal networks (TNs) \cite{Rossetti2018b,Pan2011}, temporal graphs (TGs) \cite{Michail2015}, and snapshot networks (SNs) \cite{Rossetti2018b,Wehmuth2015a}.
%
In this direction, we follow the same unifying and universal approach in \cite{Wehmuth2015a} with the purpose of showing that a particular class of dynamic networks (in the case, snapshot-dynamic networks) displays less irreducible information content than a more general representation of dynamic networks such as TVGs.
However, studying the advantages and disadvantages of each representation model in terms of network analysis is not in our current scope.
Thus, we focus on studying a general snapshot-based representation of dynamic networks and its algorithmic randomness in relation to time-varying graphs (TVGs) and their algorithmic randomness. 
Note that TVGs are second order multiaspect graphs (MAGs) \cite{Wehmuth2015a,Wehmuth2018,Costa2015a}. 

The main idea is to: first, briefly discuss equivalences of some of the main representations of snapshot-like dynamic networks; secondly, study the algorithmic randomness of snapshot-dynamic networks, which can be represented by a particular class of TVGs;
and, then, compare with the algorithmic randomness of general undirected dynamic networks, which are arbitrary simple TVGs, i.e., second order simple MAGs.

Except for the cases in which the pertinence of the vertices in each time instant (and not only its connectivity in each time instant) do matter in the network analysis---see node-alignment below---, one can easily show that a snapshot-based representation as in \cite{Rossetti2018b} is equivalent to the snapshot-based representation in \cite{Wehmuth2015a}. 
To this end, note that, in \cite{Rossetti2018b}, a snapshot network is defined as a sequence of graphs $ G_i = ( V_i , E_i ) $ in the form $ \left( G_0 , \dots , G_{t_{max}} \right) $. 
On the other hand, in \cite{Wehmuth2015a}, a snapshot network is a TVG  composed of only \emph{spatial} edges, i.e., edges that connect two vertices at the same time instant only. 
In other words, a snapshot-like dynamic network in \cite{Wehmuth2015a} is a special case of dynamic network that can be solely represented by, for example, the main diagonal blocks of the adjacency matrix of the isomorphic graph to the TVG  in Figure~\ref{Figure_SpatialTVG}. 

\begin{figure}[!htb] 
	\centering
	\includegraphics[width=.6\linewidth]{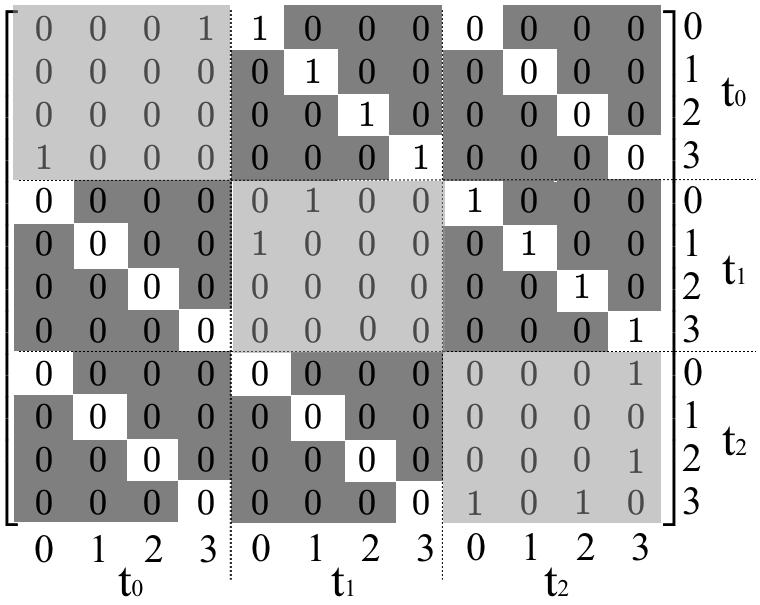}
	\caption{The adjancency matrix of the isomorphic graph to the TVG, which represents a sequentially coupled node-aligned dynamic network.} \label{Figure_SpatialTVG}
\end{figure}

For the sake of simplicity, we call an arbitrary TVG $ \mathrm{ G_t } = (\mathrm{V},\mathscr{E},\mathrm{T}) $ composed of only \emph{spatial} edges as a \emph{spatial} TVG. Therefore,
the sequence of vertex sets $ \left(  V_0, \dots , V_{t_{max}}  \right) $ in $ G_i = ( V_i , E_i ) $ may be mapped onto a larger vertex set 
\[ 
\mathrm{V} = \bigcup\limits_{ V_i \in \left(  V_0, \dots , V_{ t_{max} }  \right)  }^{} V_i
\]
\noindent such that $\mathrm{ G_t }=(\mathrm{V},\mathscr{E},\mathrm{T})$, $ \left| \{ G_i \vert \, G_i \in \left(  G_0, \dots , G_{ t_{max} } \right) \} \right| = \left| \mathrm{T}( \mathrm{ G_t } ) \right| $, and 
\begin{align}\label{BequationSpatialTVGandSnapshot}
	( u , v ) \in E_i( G_i ) \iff ( u , t_i , v , t_i ) \in \mathscr{E}( \mathrm{ G_t } ) 
	\text{ ,}
\end{align}
\noindent where $ 0 \leq i \leq t_{max} $.  
Inversely, a spatial TVG can be univocally represented by a sequence of graphs $ \left( G_0 , \dots , G_{t_{max}} \right) $ as in \cite{Rossetti2018b} by simply assuming $ V_i = V_j $ for every $ 0 \leq i \leq t_{max}  $ and $ 0 \leq j \leq t_{max}  $, so that the equivalence in Equation~\eqref{BequationSpatialTVGandSnapshot} also holds.

In most cases, a \emph{node-alignment}~\cite{Kivela2014,Cozzo2018} hypothesis (i.e., $ V_i = V_j $ for every $ 0 \leq i \leq t_{max}  $ and $ 0 \leq j \leq t_{max}  $) is assumed, so that no additional information would be needed to determine which vertices do not belong to a specific time instant.
Thus, for the present purposes of this article, we assume that the snapshot network is node-aligned.

Nevertheless, as studied in~\cite{Abrahao2020cAIDistortionsCN,Abrahao2021AIDistortionsEntropy} for arbitrary multidimensional networks instead of TVGs, an interesting future research would be to investigate the worst-case scenarios in which additional information is needed to recover the original snapshot network that is not node-aligned from the spatial TVG.
Indeed, an irreducible information dependency on the sequence $ \left(  \mathrm{V} \setminus V_0, \dots , \mathrm{V} \setminus V_{t_{max}}  \right) $ of excluded vertices may take place in a similar manner to the one on the companion tuple in \cite{Abrahao2018darxivandreport}.

In any event, note that retrieving the correspondent spatial TVG \cite{Wehmuth2015a,Wehmuth2016b} from the snapshot network \cite{Rossetti2018b} is always straightforward by the addition of \emph{empty nodes} \cite{Kivela2014}] (or, in MAG terminology, unconnected composite vertices \cite{Wehmuth2017}), since every TVG is defined on the set of composite vertices \cite{Wehmuth2016b} and, therefore, is always node-aligned by definition.

Another formalization of a snapshot-like dynamic network may be through restricting the set $ \mathbb{E}( \mathrm{ G_t } ) $ of all possible composite edges of a TVG $ \mathrm{ G_t } $ into another set $ \mathbb{E'}( \mathrm{ G_t } ) $ such that, for every $ e \in \mathbb{E}( \mathrm{ G_t } ) $ and $ i , j \in \mathbb{N} $,
\[
e = \left( u , t_i , v , t_ j \right) \in \mathbb{E'}( \mathrm{ G_t } )   \iff   j = f(i)
\text{ ,}
\]
\noindent where $ f  : \{  0 , f(0), f(f(0)), \dots , f^{-1}\left( \left| \mathrm{T}( \mathrm{ G_t } ) \right| - 1 \right) \}  \rightarrow  \{  f(0) , f(f(0)) , \dots , \left| \mathrm{T}( \mathrm{ G_t } ) \right| - 1 \}  $ is a strictly increasing bijective function defined on any recursive iteration. 
Thus, these restricted TVGs $ \mathrm{ G_t } = (\mathrm{V},\mathscr{E'},\mathrm{T}) $, where $ \mathscr{E'}\left( \mathrm{ G_t } \right) \subseteq \mathbb{E'}( \mathrm{ G_t } ) $, may present advantages when considering a more realistic scenario in which relations or communications between nodes are not instantaneous or demand non-equal time intervals over time. 
In addition, if one allows \emph{temporal edges} (i.e., edges connecting the same node at two distinct time instants ~\cite{Wehmuth2015a}), which are directly analogous to \emph{coupling edges} \cite{Kivela2014} in the multilayer case, some assumptions like the one that guarantees the transitivity on a node in the case it is disconnected within one or more snapshots become unnecessary.

In fact, unlike the multilayer case in which there may not be a physical interpretation of the necessity of preserving the pairwise ordering of layers---as we will also discuss in Section~\ref{subsectionMultiplex}---, this assumption of transitivity is represented by a restriction on the set of temporal (or coupling) edges:
we say a TVG is \emph{sequentially coupled} if all the temporal (or coupling) edges are connecting the same node $u$ from the time instant $t_i$ to the time instant $ t_{ i + 1 } $ only and, for every $ 0 \leq i \leq t_{max} $, every node $u$ has a temporal (or coupling) edge to itself from the time instant $t_i$ to the time instant $t_{ i + 1 }$.
The reader is invited to note that sequentially coupling is an even stronger restriction of the set of coupling edges than saying that the couplings are \emph{diagonal} and/or \emph{categorical} as in \cite{Kivela2014}---see also Section~\ref{subsectionMultiplex}.
That is, if the network is sequentially coupled, there is \emph{no} other temporal (or coupling) edge connecting a node $u$ at time instant $t_i$ to the same node $u$ at time instant $t_j$ than the case in which $j = i + 1 $.
An example of sequentially coupled networks are the temporal networks as defined in \cite{Boccaletti2014a}, should they be also node-aligned.

Thus, one can see that a snapshot-like dynamic network in the form $ (\mathrm{V},\mathscr{E'},\mathrm{T}) $ enables one to better represent the cases in which the sequential coupling does not hold in general; or, in other words, in which some nodes may not relay information for future communication in forthcoming time instants.
Related to this issue, it is a consequence of the result we will demonstrate in Appendix~\ref{sectionTopologicalTVG} that arbitrary incompressible simple TVGs are not sequentially coupled.
See Corollary~\ref{corTranstemporaledges}.

In any event, we have that this representation of snapshot-like dynamic networks in the form $ (\mathrm{V},\mathscr{E'},\mathrm{T}) $ can also be reduced to spatial TVGs $ (\mathrm{V},\mathscr{E},\mathrm{T''}) $---i.e., without the restrictions in the set of composite edges---by injectively mapping the set of time intervals onto another set $ \mathrm{T''}( \mathrm{ G_t } ) $ of time instants while preserving the previous ordering: 
for example, one runs a recursive bijective procedure that makes $ t''_i \equiv ( t_i , t_{ f(i) } ) $, where $ \left| \mathrm{T''}( \mathrm{ G_t } )  \right| \leq \left| \mathrm{T}( \mathrm{ G_t } )  \right| - 1 $, and
\begin{align*}\label{BequationSpatialTVGandIterativeSnapshot}
e = \left( u , t_i , v , t_{ f(i) } \right) \in \mathbb{E'}( \mathrm{ G_t } )  \iff ( u , t''_i , v , t''_i ) \in \mathscr{E}( \mathrm{ G_t } ) 
\text{ .}
\end{align*}
Again, as also may occur with snapshot networks $ \left( G_0 , \dots , G_{t_{max}} \right) $ that are not node-aligned, retrieving the spatial TVG from a snapshot-like dynamic network $ (\mathrm{V},\mathscr{E'},\mathrm{T}) $ is straightforward, whereas the inverse conversion may require additional information---in this latter particular case to determine which is the time interval $ ( t_i , t_{ f(i) } ) $ that each $ t''_i $ is representing (in the case these intervals are not uniformly equal).
Thus, a future investigation of the worst-case information dependency of a non-uniform function $f(i)$ will be necessary; and the irreducible topological information (i.e., the irreducible information necessary to determine/compute $ \left< \mathscr{E'} \right> $) carried by an arbitrary $ (\mathrm{V},\mathscr{E'},\mathrm{T}) $ may be less compressible than that of spatial TVGs.

In this way, for the purposes of this article, we assume hereafter the representation of a snapshot-like dynamic network as a spatial TVG; and we call them simply as \emph{snapshot-dynamic network}.
Note that underlying properties in snapshot-dynamic networks, like the sequential coupling, are fixed. 
Therefore, the algorithmic information of a spatial TVG and the algorithmic information of a spatial TVG with the addition of sequential couplings can only differ by a constant (that only depends on the chosen language $ \mathbf{L_U} $) and, thus, is negligible in our forthcoming results.

It is straightforward to calculate the maximum number of spatial directed edges $ e \in \mathscr{E}( \mathrm{ G_t } ) $ in a \emph{traditional} TVG $ \mathrm{ G_t^d } = ( \mathrm{V} , \mathscr{E} , \mathrm{T} ) $. 
Note that a \emph{traditional} MAG is a (directed or undirected) MAG without (composite) self-loops \cite{Abrahao2018darxivandreport,Wehmuth2016b}. 
(See Definition~\ref{defTraditionalMAG}).  
Moreover, from \cite{Wehmuth2015a,Costa2015a}, we have that a TVG is a second order MAG. (See Definition~\ref{BdefTVG}).
Therefore, we define a \emph{traditional directed} TVG $ \mathrm{ G_t^d } = ( \mathrm{V} , \mathscr{E} , \mathrm{T} ) $ as a TVG without 
self-loops. 
This way, we will have that there are
\[
\left| \mathrm{T}( \mathrm{ G_t^d } ) \right| \left(  \left| \mathrm{V}( \mathrm{ G_t^d } ) \right|^2 - \left| \mathrm{V}( \mathrm{ G_t^d } ) \right| \right)
\]
\noindent possible spatial directed edges. 

From a simple graph (i.e., an undirected graph without self-loops) perspective, we may consider spatial TVGs as sequences of simple graphs. We call a \emph{simple} TVG $ \mathrm{ G_t^u } = ( \mathrm{V} , \mathscr{E} , \mathrm{T} ) $ as a particular case of a simple second order MAG, where a simple MAG (see Definition~\ref{defSimplifiedMAG})) is defined in \cite{Abrahao2018darxivandreport} as a traditional \emph{undirected} MAG. This way, 
we will have that there are
\[
\left| \mathrm{T}( \mathrm{ G_t^u } ) \right| \left( \frac{  \left| \mathrm{V}( \mathrm{ G_t^u } ) \right|^2 - \left| \mathrm{V}( \mathrm{ G_t^u } ) \right| }{ 2 } \right)
\]
\noindent possible spatial undirected edges in \emph{simple spatial} TVGs. 

Now, let $ \mathrm{G'_t} $ denote an arbitrary simple spatial TVG $ \mathrm{ G_t^u } = ( \mathrm{V} , \mathscr{E} , \mathrm{T} ) $ that belongs to a \emph{recursively labeled} infinite family $ F_{ G'_t } $ of all possible simple TVGs with vertex labels in $ \mathbb{N} $. 
The existence of such a family is guaranteed by Lemma~\ref{lemmaLabeledfamilyofMAG}. 
Directly from Definition~\ref{BdefCharacteristicstringofasimpleMAG}, we have the binary string that univocally represents the presence or absence of an edge in $ \mathscr{E}( \mathrm{ G'_t } ) $; and we call it the \emph{characteristic string} of $ \mathrm{G'_t} $ \cite{Abrahao2018darxivandreport}.
In this sense, from Corollary~\ref{corFamilyoflabeledMAGandstrings}, one can see that a characteristic string promptly contains all the information that is necessary to computably retrieve the entire $ \mathrm{G'_t} $, except for the information required to apply a previously known recursive way to label the composite vertices and order the composite edges.

It is also important to note that the information encoded in the characteristic string may be displaying a decompressed form of its algorithmic information content.  
To tackle this issue, we define the algorithmic-informational version of the characteristic string, and thus formalizing such a notion of topological (algorithmic) information---see Theorem~\ref{thmSimplespatialTVG}, Corollary~\ref{corFamilyoflabeledMAGandstrings}, and Appendix~\ref{subsectionAIT}---that is potentially agnostic with respect to node labeling or indexing:

\begin{definition}\label{defACS}
	Let $  F_{ \mathscr{G}_c }   $ be a recursively labeled family of simple MAGs $ \mathscr{G}_c $.
	Let $ p''_1 , \, p''_2 \in \mathbf{L_U} $ be fixed and only depend on the choice of the family $  F_{ \mathscr{G}_c }   $.
	We say a binary string $ y \in \{ 0 , 1 \}^* $ is an \emph{algorithmically characteristic string} of $ \mathscr{G}_c $ with respect to $  F_{ \mathscr{G}_c }   $ \textit{iff} 
			$ \mathbf{U}( \left< y , p''_1 \right> ) = \left< \mathscr{E}( \mathscr{G}_c  ) \right> $ and 
			$ \mathbf{U}( \left< \left< \mathscr{E}( \mathscr{G}_c ) \right> , p''_2 \right> ) = y $.
\end{definition}

Thus, if $y$ is such an algorithmically characteristic string, it is immediate to show that 
\begin{equation*} 
	C( \mathscr{E}( \mathscr{G}_c  ) \vert y ) \leq K( \mathscr{E}( \mathscr{G}_c  )  \vert y ) + \mathbf{O}(1) = \mathbf{O}(1)
\end{equation*}
and 
\begin{equation*}
	C( y \vert \mathscr{E}( \mathscr{G}_c  ) ) \leq K( y \vert \mathscr{E}( \mathscr{G}_c  ) ) + \mathbf{O}(1) = \mathbf{O}(1) 
\end{equation*}
hold independently of the choice of $ \mathscr{G}_c $ in $ F_{ \mathscr{G}_c } $.

On the other hand, it may not be the case that the opposite implication $\big($i.e., 
\begin{equation*} 
	K( \mathscr{E}( \mathscr{G}_c  )  \vert y ) + \mathbf{O}(1) = \mathbf{O}(1)
\end{equation*}
and 
\begin{equation*}
	K( y \vert \mathscr{E}( \mathscr{G}_c  ) ) + \mathbf{O}(1) = \mathbf{O}(1) 
\end{equation*}
\noindent implying the existence of constants $ p''_1 , \, p''_2 \in \mathbf{L_U} $ such that 
$ \mathbf{U}( \left< y , p''_1 \right> ) = \left< \mathscr{E}( \mathscr{G}_c  ) \right> $ 
and 
$ \mathbf{U}( \left< \left< \mathscr{E}( \mathscr{G}_c ) \right> , p''_2 \right> ) = y $ $\big)$ does hold in general---and this should be an interesting future research. 
For example, a possible question in this direction might be whether it is possible or not to construct a recursively labeled infinite family of MAGs in which there is an infinite subfamily of MAGs that are K-trivial \cite{Downey2010}, but not computable, with respect to a string $y$.

In any event, from the proof of Corollary~\ref{corFamilyoflabeledMAGandstrings} presented in \cite{Abrahao2018darxivandreport}, we have that Definiton~\ref{defACS} is always satisfiable by taking the algorithmically characteristic string $y$ as, for instance, the very characteristic string.
This holds because of the recursively labeling of the entire family of MAGs, as in Definition~\ref{defLabeledfamilyofMAG}.
However, since one surely knows there are 
\[
\left( \frac{  \left( \left| \mathrm{V}( \mathrm{G'_t} ) \right| \, \left| \mathrm{T}( \mathrm{G'_t} ) \right| \right)^2 - \left| \mathrm{V}( \mathrm{G'_t} ) \right| \, \left| \mathrm{T}( \mathrm{G'_t} ) \right| }{ 2 } \right)
-
\left| \mathrm{T}( \mathrm{G'_t} ) \right| \left( \frac{  \left| \mathrm{V}( \mathrm{G'_t} ) \right|^2 - \left| \mathrm{V}( \mathrm{G'_t} ) \right| }{ 2 } \right)  
\] 
non-existent non-spatial undirected edges (including non-spatial undirected edges that are sequential couplings) and $ \mathrm{G'_t} $ is a second order simple  MAG $ \mathscr{G}_c $, one can compress the characteristic string of $ \mathrm{G'_t} $ in such a way that the resulting algorithmically characteristic string retains the algorithmic information carried, or conveyed, by the usual characteristic string:

\begin{theorem}\label{thmSimplespatialTVG}
	Let $ \mathrm{G'_t} = (\mathrm{V},\mathscr{E},\mathrm{T}) $ be a simple spatial TVG that belongs to a recursively labeled infinite family $ F_{ G'_t } $ of simple TVGs.
	Then, there is a binary string $ y  \in \{ 0 , 1 \}^*  $ that is an \emph{algorithmically characteristic string} of $ \mathrm{G'_t} $ such that
	\[
	K( y ) 
	\leq
	l(y) 
	+ \mathbf{O}(1)
	\leq 
	\left| \mathrm{T}( \mathrm{G'_t} ) \right| \left( \frac{  \left| \mathrm{V}( \mathrm{G'_t} ) \right|^2 - \left| \mathrm{V}( \mathrm{G'_t} ) \right| }{ 2 } \right)  
	+ 
	\mathbf{O}\left( \log_2\left( \left| \mathrm{V}( \mathrm{G'_t} ) \right| \, \left| \mathrm{T}( \mathrm{G'_t} ) \right| \right) \right)
	\text{ ,}
	\]
	\[
	K( x ) 
	\leq 
	K(y) + \mathbf{O}( 1 )
	\text{ ,}
	\]
	\[
	K( x  \mid {y}^* ) \leq K( x  \mid {y} ) + \mathbf{O}\left( 1 \right) \leq \mathbf{O}\left( 1 \right)
	\text{ ,}
	\]
	\[
	K( y ) \leq K(x) + \mathbf{O}( 1 )
	\text{ ,}
	\]
	and
	\[
	K( y \mid x^* ) \leq K( y \mid x ) + \mathbf{O}( 1 ) \leq \mathbf{O}( 1 )
	\]
	hold, where $x$ is the characteristic string of $ \mathrm{G'_t} $.

\end{theorem}

Thus, for every simple spatial TVG  $ \mathrm{G'_t} = (\mathrm{V},\mathscr{E},\mathrm{T}) $ that belongs to a \emph{recursively labeled} infinite family $ F_{ G'_t } $ of simple TVGs, we will have that, from Lemma~\ref{lemmaBasicAIT}, Theorem~\ref{thmSimplespatialTVG}, and Corollary~\ref{corFamilyoflabeledMAGandstrings},
\begin{equation*}
	\begin{aligned}
		C\left( \mathscr{E}( \mathrm{G'_t} ) \, \Big\vert \, \left( \left| \mathrm{V}( \mathrm{G'_t} ) )  \right| \, \left| \mathrm{T}( \mathrm{G'_t} ) )  \right| \right) \right) 
		& \leq 
		K( \mathscr{E}( \mathrm{G'_t} ) ) +  \mathbf{O}( 1 ) 
		\leq \\
		& \leq 
		\left| \mathrm{T}( \mathrm{G'_t} ) \right| \left( \frac{  \left| \mathrm{V}( \mathrm{G'_t} ) \right|^2 - \left| \mathrm{V}( \mathrm{G'_t} ) \right| }{ 2 } \right)  \\  
		& \quad + \mathbf{O}\left( \log_2\left( \left| \mathrm{V}( \mathrm{G'_t} ) \right| \, \left| \mathrm{T}( \mathrm{G'_t} ) \right| \right) \right)
	\end{aligned}
\end{equation*}
\noindent holds.\footnote{ In fact, one can even improve this inequality to show that $ C\left( \mathscr{E}( \mathrm{G'_t} ) \, \Big\vert \, \left( \left| \mathrm{V}( \mathrm{G'_t} ) )  \right| \, \left| \mathrm{T}( \mathrm{G'_t} ) )  \right| \right) \right) \leq \mathbf{O}\left( \left| \mathrm{T}( \mathrm{G'_t} ) \right| \left( \frac{  \left| \mathrm{V}( \mathrm{G'_t} ) \right|^2 - \left| \mathrm{V}( \mathrm{G'_t} ) \right| }{ 2 } \right) \right) $.}
On the other hand, as shown in \cite{Abrahao2018darxivandreport}, we directly have from Lemma~\ref{lemmaK-randomMAGs}, Corollary~\ref{corFamilyoflabeledMAGandstrings}, and Theorem~\ref{thmK-randomandC-randomMAGs} that, for arbitrary $ \mathbf{O}(1) $-K-random simple TVGs $ \mathrm{ G_t } $,
\[
K( \mathscr{E}( \mathrm{G_t} ) ) 
\geq
\left( \frac{  \left( \left| \mathrm{V}( \mathrm{G_t} ) \right| \, \left| \mathrm{T}( \mathrm{G_t} ) \right| \right)^2 - \left| \mathrm{V}( \mathrm{G_t} ) \right| \, \left| \mathrm{T}( \mathrm{G_t} ) \right| }{ 2 } \right)
-  
\mathbf{O}( 1 ) 
\]
\noindent and 
\begin{equation*}
\begin{aligned}
	C( \mathscr{E}( \mathrm{G_t} ) \mid \left( \left| \mathrm{V}( \mathrm{G_t} ) )  \right| \, \left| \mathrm{T}( \mathrm{G_t} ) )  \right| \right) ) 
	\geq
	\left( \frac{  \left( \left| \mathrm{V}( \mathrm{G_t} ) \right| \, \left| \mathrm{T}( \mathrm{G_t} ) \right| \right)^2 - \left| \mathrm{V}( \mathrm{G_t} ) \right| \, \left| \mathrm{T}( \mathrm{G_t} ) \right| }{ 2 } \right) \\
	-
	\mathbf{O}\left( \log_2\left( \left| \mathrm{V}( \mathrm{G_t} ) \right| \, \left| \mathrm{T}( \mathrm{G_t} ) \right| \right) \right)
	\text{ .}
\end{aligned}
\end{equation*}
Therefore, for large enough sets of time instants, $ \mathbf{O}(1) $-K-random simple TVGs carry at least
\begin{equation}\label{eqSnapshotnetwork}
\begin{aligned}
	\left( \frac{  \left( \left| \mathrm{V}( \mathrm{G_t} ) \right| \, \left| \mathrm{T}( \mathrm{G_t} ) \right| \right)^2 - \left| \mathrm{V}( \mathrm{G_t} ) \right| \, \left| \mathrm{T}( \mathrm{G_t} ) \right| }{ 2 } \right) 
	-
	\left| \mathrm{T}( \mathrm{G_t} ) \right| \left( \frac{  \left| \mathrm{V}( \mathrm{G_t} ) \right|^2 - \left| \mathrm{V}( \mathrm{G_t} ) \right| }{ 2 } \right) \\
	-
	\mathbf{O}\left( \log_2\left( \left| \mathrm{V}( \mathrm{G_t} ) \right| \, \left| \mathrm{T}( \mathrm{G_t} ) \right| \right) \right)
\end{aligned}
\end{equation}
more topological (or network) irreducible information than any simple spatial TVG could carry.

\subsection{Multiplex networks}\label{subsectionMultiplex}

Other network models of increasing importance in network science are those in which the nodes and/or connections between nodes may have distinct features \cite{Kivela2014,Boccaletti2014a,Lambiotte2019,Domenico2015,Contractor2011,Wehmuth2016b} other than the time dependency.
For example, one may differentiate in the case of: 
\begin{itemize}
	\item social networks, 
	\begin{itemize}
		\item between connection features, such as friendship, family, professional colleagues, face-to-face interaction, email interaction, etc \cite{Kivela2014,Contractor2011,Cozzo2018}; 
		\item or between node features, such as gender, online platform, school, city, company, etc \cite{Contractor2011,Kivela2014};
	\end{itemize} 
	\item biological networks, 
	\begin{itemize}
		\item between connection features, such as distinct nature of protein-protein reactions \cite{Domenico2015}, neuronal interactions (in particular, either through chemical or ionic channels) \cite{Boccaletti2014a}, etc; 
		\item or between node features, such as distinct species, gene pool, community, location \cite{Pilosof2017}, etc;
	\end{itemize} 
	\item multimodal transportation networks, 
	\begin{itemize}
		\item between connection features, such as bus network, the subway network, the air transportation network~\cite{Lambiotte2019,Wehmuth2018a}, etc; 
		\item or between node features, such as airline companies \cite{Wehmuth2018a}, etc.
	\end{itemize}
\end{itemize}
Usually in the modeling of complex networked systems, each of these features has been called a \emph{layer} \cite{Kivela2014,Boccaletti2014a,Wehmuth2017,Wehmuth2018a}.
Particularly, in this section, we focus on connection features, analyzing the theoretical characteristics of the so called \emph{multiplex} networks~\cite{Cozzo2018,Kivela2014}.
A multiplex network is a type of multilayer network.
We first briefly discuss some representation equivalences.
Then, we apply an analogous investigation to that of Section~\ref{subsectionSnapshotDynamic}. 

Before presenting multiplex networks, it is important to recover some definitions and nomenclatures from the previous literature on the subject, clarifying conditions or assumptions behind the mathematical concepts, so as to enable unambiguous applications of our theoretical results in future studies of real-world networks.
In doing so, we are not only grounding our nomenclature on, but also following the same purpose of terminology unification in \cite{Kivela2014}.
In this way, we choose an approach in order that one can distinguish between multilayer networks and dynamic networks, even though both can be viewed as just distinct physical interpretations of the same kind of network's aspect (or dimension).

To this end, we compare a well-known definition of multilayer network as in \cite{Kivela2014} with the multidimensional (i.e., high-order) approach formalized in \cite{Wehmuth2016b,Wehmuth2017}.
In \cite{Kivela2014} (or, equivalently, in \cite{Boccaletti2014a}), a \emph{multilayer} network $ M =
(V_M , E_M , V, \mathbf{L} ) $ is understood as an interconnected and/or intraconnected labeled collection of graphs.
Each of these graphs represents a `layer' whose (distinct or not) vertex sets can be either linked within this `layer' or linked to a vertex in another `layer'.  
Formally, a multilayer network $ M =
(V_M , E_M , V, \mathbf{L} ) $ \cite{Kivela2014} is defined by: 
\begin{itemize}
\item $ V $ denotes the set of all possible vertices $ v $; 
\item $ \mathbf{L} = \{ L_a \}_{a=1}^d $ denotes a collection of $ d \in \mathbb{N} $ sets $ L_a $ composed of \emph{elementary layers} $ \alpha \in  L_a $;
\item $ V_M \subseteq V \times L_1 \times \cdots \times L_d $ denotes the subset of all possible vertices that belong to a `layer' $ L_1 \times \cdots \times L_d $;
\item $ E_M \subseteq V_M \times V_M $ denotes the set of \emph{interlayer} and/or \emph{intralayer} edges connecting two \emph{node-layer tuples} $ \left( v , \alpha_1 , \dots , \alpha_d \right) \in V_M $.
\end{itemize}
An interlayer edge is defined as $ \left( \left( u , \alpha_1 , \dots , \alpha_d \right) , \left( v , \beta_1 , \dots , \beta_d \right) \right) \in E_M $ such that $ \left( \alpha_1 , \dots , \alpha_d \right) \neq \left( \beta_1 , \dots , \beta_d \right) $ and an intralayer is defined as $ \left( \left( u , \alpha_1 , \dots , \alpha_d \right) , \left( v , \beta_1 , \dots , \beta_d \right) \right) \in E_M  $ such that $ \left( \alpha_1 , \dots , \alpha_d \right) = \left( \beta_1 , \dots , \beta_d \right) $.
In addition, one defines a \emph{coupling} edge $ \left( \left( u , \alpha_1 , \dots , \alpha_d \right) , \left( v , \beta_1 , \dots , \beta_d \right) \right) \in E_M $ as an interlayer edge with $ u = v $.

Actually, there are in fact some equivalences between a MAG \cite{Wehmuth2016b} (see Definition~\ref{defMAG}) and a multilayer $ M =
(V_M , E_M , V, \mathbf{L} ) $:
$ V $ is the usual set of vertices $ V( \mathscr{G} )  \equiv  \mathscr{A}( \mathscr{G} )[1] $;
each set $ L_a $ is the $a$-th aspect $ \mathscr{A}( \mathscr{G} )[a]  $ of a MAG $ \mathscr{G} $;
$ V_M $ is a subset of the set of all composite vertices $ \mathbb{V}( \mathscr{G} ) $, where a node-layer tuple $ \left( v , \alpha_1 , \dots , \alpha_d \right) \in V_M $ is a composite vertex $ \mathbf{v} \in \mathbb{V}( \mathscr{G} ) $;
$ E_M \equiv  \mathscr{E}( \mathscr{G} ) $ is a subset of the set of all composite edges $ \mathbb{E}(\mathscr{G})  $.
The only distinctive characteristic of $ M =
(V_M , E_M , V, \mathbf{L} ) $ and $ \mathscr{G}=(\mathscr{A},\mathscr{E}) $ is the possibility that one or more vertices do not belong to one or more aspects of a MAG. 
Therefore, if a multilayer network $M$ is node-aligned \cite{Kivela2014,Cozzo2018}, i.e., $ V_M = V \times L_1 \times \cdots \times L_d $, this multilayer network $M$ is equivalent to a $ \left( d+1 \right) $-order MAG.
Thus, a node-aligned multilayer network with $d$ layers is a network that can be mathematically represented by a $ \left( d+1 \right) $-order MAG.

In most cases, as mentioned in Section~\ref{subsectionSnapshotDynamic}, a node-alignment hypothesis can be assumed \cite{Kivela2014}, except for the cases in which the pertinence of the vertices in each layer (and not only its connectivity) do matter in the network analysis.
In this regard, we also leave as future research the investigation of the worst cases for the information needed to retrieve $ V_M $ from $ V \times L_1 \times \cdots \times L_d $.
Due to this possibility, the algorithmic information carried by $M$ may be larger than that of the correspondent MAG $ \mathscr{G} $ with $d+1$ aspects and $ E_M =  \mathscr{E}( \mathscr{G} ) $.
Anyway, for the purposes of the present article (as we did in the dynamic case in Section~\ref{subsectionSnapshotDynamic}), we assume hereafter that the multilayer networks $M$ are node-aligned, so that $ \left< E_M \right> $ denotes the string $ \left< \mathscr{E}( \mathscr{G} )  \right>  $ with respect to such correspondent MAG $ \mathscr{G} $, where $  E_M \coloneqq  \mathscr{E}( \mathscr{G} )  $.

Besides the above formalities, we choose a distinguishing interpretation of the two currently studied multidimensional structures in complex networks: dynamism and ``multilayerism''.\footnote{ And this distinction will be reflected in our nomenclature and in our notation.}
In accordance with \cite{Wehmuth2016b,Wehmuth2017,Wehmuth2018a}, and unlike \cite{Kivela2014,Boccaletti2014a,Domenico2016} (where dynamic networks are considered as a particular type of multilayer networks), we are considering a time-varying graph topology, like in dynamic networks in Section~\ref{subsectionSnapshotDynamic}, and a multilayer topology, like the ones of the above described $M$, as two distinct multidimensional structures.

At a first glance, both the mathematical representations of a finite time progression and of a finite number of different layers can be performed by an ordered set of labels (or indexes).
However, besides distinct physical properties, a multilayer network may not need to obey a sequential indexing of layers that corresponds to a meaningful ordering of the physical counterparts of each layer.
This promptly differs from the sequential coupling introduced in Section~\ref{subsectionSnapshotDynamic}.
For example, node features (e.g., species, gender, or company) or connection features (e.g., bus network or email interaction) in the multilayer case do not have an intrinsic underlying structure that indicates the direction that the information is `flowing through' an edge, whereas the opposite holds in principle for time instants in dynamic networks.
Moreover, as we will show in Appendix~\ref{sectionTopologicalTVG}, some topological properties, such as the presence of crosslayer edges, may happen to make sense in multilayer networks, whereas, in dynamic networks---specially, in snapshot-dynamic networks---the presence of transtemporal edges may not.

In addition, we are assuming a general meaning of the nomenclature `multidimensional' so as to encompass both each individual node feature\footnote{ See the first paragraph of this section for more examples of these individual features.} and each type of the individual node features.
We say that each type of individual feature, such as being an arbitrary set of time instants or being an arbitrary set of layers, is a \emph{node dimension}.
Thus, a node dimension corresponds to an aspect of a MAG, as in \cite{Wehmuth2016b,Wehmuth2017,Wehmuth2018a}.
This is in consonance with a common understanding of a dimension as being an aspect or property in which a particular object can assume different values, names, etc. 

However, note that this differs from some usages of the term in the literature of network science, where for example one may say that a node $u$ linked to a node $v$ through a family relation lies on a different `dimension' than that of a node $u'$ linked to a node $v'$ through a professional relation.
In this sense, the particular object that can assume different values is the connection itself and the different values are the nodes. 
This way, any element of an aspect of a MAG other than the set of vertices represents a `dimension'.
In fact, such usage of the term may be also found in an overlap with that of multiplex networks \cite{Boccaletti2014a,Berlingerio2011}.
Thus, in order to avoid ambiguities, we call this kind of dimension as \emph{connection dimension}.
As a consequence, it derives directly from our chosen nomenclature that every connection dimension belongs to a node dimension.

In this way, our approach ensures that in any case one employs multidimensional networks, it can either mean networks with more than one node dimension, e.g., a dynamic network, or networks with more than one connection dimension, e.g., a social network with only two types of interactions, like family and professional.
Thus, unlike for example in \cite{Boccaletti2014a,Berlingerio2011}, in which multidimensional networks refers basically to multiplex networks, we adopt the convention of defining a \emph{general multidimensional network} as a network that has more than one node dimension and more than one connection dimension.
As a consequence, both multilayer networks and dynamic networks become particular cases of general multidimensional networks.
Also note that a network with only one node dimension and connection dimension is a monoplex (i.e., single-layer or monolayer) network \cite{Kivela2014,Domenico2013,Cozzo2018} and, consequentially, is totally equivalent to a graph.

In addition, we define a \emph{high-order network} as a network that can be mathematically represented by a \emph{high-order} MAG, i.e., a MAG with two or more aspects, each containing two or more elements \cite{Wehmuth2018b,Wehmuth2018a,Abrahao2018darxivandreport}.
Therefore, every node-aligned general multidimensional network is a high-order network.

In fact, it is shown in \cite[Section 3.3]{Wehmuth2017} that one can, for instance, construct a main-component graph $ m( \mathscr{G} ) $, which is defined as the MAG $ \mathscr{G} ) $ with the unconnected composite vertices excluded.
In this sense, if one allows the a priori exclusion of arbitrary composite vertices, it is possible to define MAGs that are not node-aligned, which would establish a complete equivalence between general multidimensional networks and high-order networks according to our nomenclature.
However, in consonance with Appendix~\ref{appendixBackgroundMAGs} and \cite{Abrahao2018darxivandreport}, we choose to stick with the notion of a MAG defined on a full set of possible composite vertices in this article. 
Hereafter, unless specified differently, multidimensional networks refers to node-aligned general multidimensional (i.e., high-order) networks and dimension refers to a node dimension.
Whereas we are focusing on the multiplex case in this section, we will return to higher-order network properties in Appendix~\ref{sectionTopologicalTVG}.

As in \cite{Kivela2014,Cozzo2018}, and similarly in \cite{Domenico2013}, a \emph{multiplex network} $ \mathcal{M} $ is a particular case of multilayer networks $ M =
(V_M , E_M , V, \mathbf{L} ) $ that are diagonally coupled, categorical, and potentially layer-connected, where $ \mathbf{L} =  \{ L_a \}_{a=1}^1 = \{L_1\} $ and $ \left| L_1 \right| \geq 2 $.
From \cite{Kivela2014}, we have that: 
a network $ M $ is \emph{diagonally coupled} if and only if, for every interlayer edge 
	$ \left( \left( u , \alpha \right) , \left( v , \beta \right) \right) \in E_M $, where $  \alpha  \neq  \beta  $, one has that $ u = v $;
a network $M$ is \emph{categorically coupled} if and only if, for every $ \left( u , \alpha \right) , \left( u , \beta \right) \in V_M $, one has that $ \left( \left( u , \alpha \right) , \left( u , \beta \right) \right) \in E_M $.
Also in consonance with \cite{Kivela2014}, we define here a condition for multiplex networks in order to ensure that the categorical couplings always apply to each pair of layers for at least one vertex $ u \in  V $: a network $M$ is \emph{potentially layer-connected} if and only if one has that
$ V_{  \alpha } \cap V_{ \beta } \neq \emptyset $, where $ V_{  \gamma } \coloneqq  \left\{ v \, \middle\vert\,  \left( v , \gamma \right) \in V_M \right\}$ and $ \alpha, \beta, \gamma \in L_1 $ are arbitrary.
This property is particularly important if $M$ is not node-aligned.
In general, most multiplex networks are considered to be node-aligned \cite{Kivela2014}.
In fact, for example in \cite{Boccaletti2014a,Radicchi2017}, multiplex networks are defined already assuming a node-alignment hypothesis.
Moreover, as we already saw for node-aligned multilayer networks, we will have that any node-aligned multiplex network $ \mathcal{M} $ is equivalent to a particular type of second order MAG.
Hereafter, unless specified differently, we will consider only node-aligned multiplex networks.

Let $ G_{ \mathcal{M} } $ be the second order simple MAG that is equivalent to an undirected node-aligned multiplex network $ \mathcal{M} $.
As in Section~\ref{subsectionSnapshotDynamic}, 
note that diagonal coupling and categorical coupling are fixed.\footnote{ See the construction of program $ s_1 $ in the proof of Theorem~\ref{thmSimplespatialTVG}.}
Let $  \mathrm{G'_{ \mathcal{M} }}  = ( \mathrm{V}, \mathscr{E}, L_1 ) $ denote the second order simple MAG $ G_{ \mathcal{M} } $ with all the interlayer edges excluded.
In addition, $  \mathrm{G'_{ \mathcal{M} }} $ belongs to a \emph{recursively labeled} infinite family $ F_{ G'_{ \mathcal{M} } } $ of arbitrary simple second order MAGs.
Then, analogously to Section~\ref{subsectionSnapshotDynamic}, the algorithmic information of $ G_{ \mathcal{M} } $ and the algorithmic information of $ \mathrm{G'_{ \mathcal{M} }} $ can only differ by a constant (that only depends on the chosen language $ \mathbf{L_U} $) and, thus, will be negligible in our forthcoming results.

With these definitions and nomenclature clarified in this section---as the reader may notice---, $ \mathrm{G'_{ \mathcal{M} }} $ is in fact an equivalent representation of a spatial simple TVG, except for a change in notation and nomenclature.
Therefore, we can now directly translate Theorem~\ref{thmSimplespatialTVG} and all the other results from Section~\ref{subsectionSnapshotDynamic}:

\begin{corollary}\label{corMultiplex}
	Let $  \mathrm{G'_{ \mathcal{M} }}  = ( \mathrm{V}, \mathscr{E}, L_1 ) $ belong to a recursively labeled infinite family $ F_{ G'_{ \mathcal{M} } } $ of simple second order MAGs.
	Then, there is a binary string $ y  \in \{ 0 , 1 \}^*  $ that is an \emph{algorithmically characteristic string} of $ \mathrm{G'_{ \mathcal{M} }} $ such that
	\begin{equation*}
	\begin{aligned}
	K( y )
	\leq \\
	l(y) 
	+ \mathbf{O}(1)
	\leq \\
	\left| L_1( \mathrm{G'_{ \mathcal{M} }} ) \right| \left( \frac{  \left| \mathrm{V}( \mathrm{G'_{ \mathcal{M} }} ) \right|^2 - \left| \mathrm{V}( \mathrm{G'_{ \mathcal{M} }} ) \right| }{ 2 } \right)  
	+ 
	\mathbf{O}\left( \log_2\left( \left| \mathrm{V}( \mathrm{G'_{ \mathcal{M} }} ) \right| \, \left| L_1( \mathrm{G'_{ \mathcal{M} }} ) \right| \right) \right)
	\text{ ,}
	\end{aligned}
	\end{equation*}
	where
	\[
	K( x ) 
	\leq 
	K(y) + \mathbf{O}( 1 )
	\text{ ,}
	\]
	\[
	K( x  \mid {y}^* ) \leq K( x  \mid {y} ) + \mathbf{O}\left( 1 \right) \leq \mathbf{O}\left( 1 \right)
	\text{ ,}
	\]
	\[
	K( y ) \leq K(x) + \mathbf{O}( 1 )
	\text{ ,}
	\]
	and
	\[
	K( y \mid x^* ) \leq K( y \mid x ) + \mathbf{O}( 1 ) \leq \mathbf{O}( 1 )
	\]
	hold and $x$ is the characteristic string of $ \mathrm{G'_{ \mathcal{M} }} $.

\end{corollary}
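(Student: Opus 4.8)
The plan is to observe that, as established in the discussion immediately preceding the statement, $\mathrm{G'_{ \mathcal{M} }} = ( \mathrm{V}, \mathscr{E}, L_1 )$ is nothing but a simple spatial TVG under a relabeling of the second aspect, and then to invoke Theorem~\ref{thmSimplespatialTVG} verbatim. Both $\mathrm{G'_{ \mathcal{M} }}$ and a simple spatial TVG $\mathrm{G'_t} = (\mathrm{V}, \mathscr{E}, \mathrm{T})$ are second order simple MAGs in the sense of Definition~\ref{defSimplifiedMAG} whose first aspect is the vertex set; the substitution is $L_1 \leftrightarrow \mathrm{T}$. Under this identification, the \emph{intralayer} edges of $\mathrm{G'_{ \mathcal{M} }}$---which, since $\mathrm{G'_{ \mathcal{M} }}$ is constructed with all interlayer edges excluded, are the only composite edges that may be present---play exactly the role of the \emph{spatial} edges of the TVG, because an edge $((u,\alpha),(v,\beta))$ is intralayer precisely when $\alpha = \beta$, the direct analogue of $t_u = t_v$ in $e = (u, t_u, v, t_v)$.

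First I would fix the family correspondence. Because $\mathrm{G'_{ \mathcal{M} }}$ belongs to a recursively labeled infinite family $F_{ G'_{ \mathcal{M} } }$ of simple second order MAGs (whose existence is guaranteed by Lemma~\ref{lemmaLabeledfamilyofMAG}), and because a recursively labeled family of simple spatial TVGs is itself a recursively labeled family of simple second order MAGs, the two families coincide up to the physical interpretation attached to the second aspect. In particular, the edge ordering fixed by the recursive labeling (Definition~\ref{defLabeledfamilyofMAG}) and the resulting characteristic string $x$ of $\mathrm{G'_{ \mathcal{M} }}$ are well defined and indifferent to whether the second aspect is read as time or as layers. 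I would then transport the construction in the proof of Theorem~\ref{thmSimplespatialTVG}: the fixed programs $\mathrm{p'}_2, s_1, s_2, s_3, p, q$ there depend only on the recursive labeling and on the decidable test separating the edges to be bit-encoded from those forced to $0$, so replacing the spatiality test $t_u = t_v$ by the equally decidable test $\alpha = \beta$ yields programs of the same nature, all independent of the choice of $\mathrm{G'_{ \mathcal{M} }}$. The algorithmically characteristic string is $y = \left< k, y', \mathrm{p'}_2, s_1, s_2 \right>$, where $y'$ carries one bit per intralayer edge, so that $l(y') = \left| L_1( \mathrm{G'_{ \mathcal{M} }} ) \right| \left( \frac{ \left| \mathrm{V}( \mathrm{G'_{ \mathcal{M} }} ) \right|^2 - \left| \mathrm{V}( \mathrm{G'_{ \mathcal{M} }} ) \right| }{2} \right)$, and $k$, the self-delimiting length parameter, is encoded in $\mathbf{O}\left( \log_2\left( \left( \left| \mathrm{V}( \mathrm{G'_{ \mathcal{M} }} ) \right| \, \left| L_1( \mathrm{G'_{ \mathcal{M} }} ) \right| \right)^2 \right) \right)$ bits. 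The five displayed inequalities then follow exactly as in Theorem~\ref{thmSimplespatialTVG}, with $\left| L_1 \right|$ in place of $\left| \mathrm{T} \right|$, by the minimality of $K(\cdot)$ together with the Turing equivalence between $y$, $x$, and $\left< \mathscr{E}( \mathrm{G'_{ \mathcal{M} }} ) \right>$ furnished by the fixed programs $\mathrm{p''}_1$ and $\mathrm{p''}_2$.

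The main point requiring care---more a conceptual check than a genuine obstacle, which is precisely why the statement is a corollary---is verifying that the structure exploited in Theorem~\ref{thmSimplespatialTVG}, in which every non-spatial coordinate is forced to $0$, is reproduced faithfully: although the interlayer positions are absent in $\mathrm{G'_{ \mathcal{M} }}$, they still occupy coordinates in $\mathbb{E}_c( \mathrm{G'_{ \mathcal{M} }} )$ and hence in the characteristic string $x$, and one must confirm that these coordinates are all forced to $0$ so that the compression into $y'$ loses no information. This holds by construction, since $\mathrm{G'_{ \mathcal{M} }}$ excludes all interlayer edges, exactly mirroring the exclusion of non-spatial edges in the spatial TVG. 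I would finally remark, in parallel with the sequential-coupling discussion of Section~\ref{subsectionSnapshotDynamic}, that reinstating the fixed diagonal and categorical coupling edges of the full $G_{ \mathcal{M} }$ costs only a constant depending on $\mathbf{L_U}$, so that the bounds extend from $\mathrm{G'_{ \mathcal{M} }}$ to $G_{ \mathcal{M} }$ up to $\mathbf{O}(1)$; with these points settled, the corollary is an immediate translation of Theorem~\ref{thmSimplespatialTVG}.
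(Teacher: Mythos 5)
Your proposal is correct and follows essentially the same route as the paper: the paper offers no separate proof of Corollary~\ref{corMultiplex}, justifying it instead by the observation that $\mathrm{G'_{\mathcal{M}}}$ is an equivalent representation of a simple spatial TVG up to renaming the second aspect, so that Theorem~\ref{thmSimplespatialTVG} translates directly with $L_1$ in place of $\mathrm{T}$. Your additional checks (the decidability of the intralayer test $\alpha = \beta$ replacing $t_u = t_v$, the interlayer coordinates being forced to $0$, and the $\mathbf{O}(1)$ cost of the fixed couplings) merely make explicit what the paper leaves implicit.
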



Therefore, for large enough sets $ L_1 $, $ \mathbf{O}(1) $-K-random undirected node-aligned multilayer networks with $ \mathbf{L} = \{ L_1 \} $ (i.e., simple second order MAGs) carry at least
\begin{equation}\label{eqMultiplex}
\begin{aligned}
\left( \frac{  \left( \left| \mathrm{V}( \mathrm{G'_{ \mathcal{M} }} ) \right| \, \left| L_1( \mathrm{G'_{ \mathcal{M} }} ) \right| \right)^2 - \left| \mathrm{V}( \mathrm{G'_{ \mathcal{M} }} ) \right| \, \left| L_1( \mathrm{G'_{ \mathcal{M} }} ) \right| }{ 2 } \right) \\
- \left| L_1( \mathrm{G'_{ \mathcal{M} }} ) \right| \left( \frac{  \left| \mathrm{V}( \mathrm{G'_{ \mathcal{M} }} ) \right|^2 - \left| \mathrm{V}( \mathrm{G'_{ \mathcal{M} }} ) \right| }{ 2 } \right) \\
-
\mathbf{O}\left( \log_2\left( \left| \mathrm{V}( \mathrm{G'_{ \mathcal{M} }} ) \right| \, \left| L_1( \mathrm{G'_{ \mathcal{M} }} ) \right| \right) \right)
\end{aligned}
\end{equation}
more topological (or network) irreducible information than any node-aligned undirected multiplex network could carry.

Moreover, the reader is invited to note that (if both the set of vertices and the set of layers are large enough) Theorem~\ref{thmTransaspectedges} will imply that incompressible node-aligned general multilayer networks cannot be \emph{diagonally coupled} and, therefore, cannot be a multiplex networks.
To this end, just replace an aspect corresponding to a set of layers with the first aspect (which is the set of vertices) and vice-versa.

\subsection{Algorithmic-informational characterization of sequential network structures}\label{sectionAlgorithmicsnapshot}

We saw in Section~\ref{subsectionSnapshotDynamic} that snapshot-dynamic networks inevitably can carry only a number of bits of computably irreducible information that is upper bounded by $ \mathbf{O}\left( \left| \mathrm{T} \right| \left( \frac{  \left| \mathrm{V} \right|^2 - \left| \mathrm{V} \right| }{ 2 } \right)  \right) $. 
In Section~\ref{subsectionMultiplex}, we showed that the same also occurs for multiplex networks.
Thus, one can define a non-empty class of (undirected) networks whose topological information---not to be conflated with the topological structure itself---characterizes a snapshot-like structure, in particular for snapshot-dynamic networks or multiplex networks but that can be generalized for any multidimensional network:

\begin{definition}\label{defSnapshotlikeMAG}
	Let $ \mathscr{G'}_c $ be a simple second order MAG that belongs to a recursively labeled infinite family $ F_{ \mathscr{G'}_c } $ of simple second order MAGs.
	We say that the network mathematically represented by the MAG $ \mathscr{G'}_c $ is an (undirected) \emph{algorithmically snapshot-like multidimensional network} with respect to dimension (i.e., aspect) $ \mathscr{A}( \mathscr{G} )[i] $ if and only if there is an algorithmically characteristic string $ y = \left< y' , w \right> \in \{ 0 , 1 \}^* $ of $ \mathscr{G'}_c $ (as in Definition~\ref{defACS}) such that $ w \in \{ \emptyset \} \cup \{ 0 , 1 \}^* $ is independent of the choice of $ \mathscr{G'}_c \in F_{ \mathscr{G'}_c } $, one has it that
	$ y' = \left< x_1 , \dots , x_{ \left|  \mathscr{A}( \mathscr{G} )[i]  \right| } \right>  \in \{ 0 , 1 \}^* $
	and $ x_{\alpha} = \left< z_1 , \dots , z_{ k_j } \right> \in \{ 0 , 1 \}^* $, where $ 1 \leq \alpha \leq \left|  \mathscr{A}( \mathscr{G} )[i]  \right| $, $ k_j = \binom{ \left|  \mathscr{A}( \mathscr{G} )[j]  \right| }{ 2 } $, $ j \neq i $, and $ z_h \in \{ 0 , 1 \} $ with $ 1 \leq h \leq k_j $,
\end{definition}

In other words, an \emph{algorithmically snapshot-like} (undirected) \emph{multidimensional network} is a network that can be totally represented by a second order simple MAG whose composite edge set can be \emph{algorithmically} determined by only informing a sequence of presences or absences of composite edges connecting two nodes within the same node dimension.
Note that the proof of Theorem~\ref{thmSimplespatialTVG} and Corollary~\ref{corMultiplex} guarantees that Definition~\ref{defSnapshotlikeMAG} is \emph{satisfiable}, for example, by snapshot-dynamic networks or multiplex networks that are node-aligned.
The reader is also invited to note that Definition~\ref{defSnapshotlikeMAG} can be easily generalized for simple MAGs \emph{with more than two aspects} (i.e., for $ p > 2 $).
(We chose to present the second order case for sake of simplifying the introduction of new notation in the present article).

Now, let a \emph{transtemporal} edge be a composite edge $ e=( u , t_i , v , t_j ) \in \mathscr{E}( \mathrm{ G_t } ) $ with $ j \neq i \pm 1 $ and  $ j \neq i $. 
By generalizing the temporal case, let a \emph{crosslayer} edge be a composite edge $ e=( u ,  \dots  , x_{ki} , \dots ,  x_{ps} , v , \dots , x_{kj} \dots , x_{ps'}) \in \mathscr{E}( \mathscr{ G }_c  ) $ with $ j \neq i \pm 1 $ and  $ j \neq i $. 
In fact, in accordance with the previous Sections~\ref{subsectionSnapshotDynamic}, and~\ref{subsectionMultiplex}, both transtemporal and crosslayer edges are particular cases of what we call by a \emph{non-sequential interdimensional edge}, should the aspect $ \mathscr{A}( \mathscr{G}_c )[k] $ corresponds to an arbitrary node dimension of the \emph{general multidimensional network}.

It is straightforward to check from Definitions~\ref{defACS} and~\ref{defSnapshotlikeMAG} that if a multidimensional network is sufficiently compressible---e.g., its topology contains a sufficient amount of redundancies, symmetries, etc---, then there are examples of multidimensional networks, each of which containing \emph{non-sequential interdimensional edges}, although these networks are \emph{algorithmically} snapshot-like multidimensional networks.
This may indicate or reveal some \emph{dissonance} between our intuition about information and structural representation.
That is, informationally characterizing the sequential interdimensional structure of a network is not always the same as (or does not always corresponds to) the characterization in terms of the mathematical structure representation of it.

Indeed, this fundamental mathematical phenomena has been investigated also in \cite{Khoussainov2014,Harrison-Trainor2019} for infinite graphs and in~\cite{Abrahao2020cAIDistortionsCN,Abrahao2021AIDistortionsEntropy,Abrahao2023bSemanticrobustnessCLPMST} for networks in general. 
We argue that the study of such a dissonance between information and structural representation in mathematics in general, but specially focusing on networks, is paramount for understanding the underpinnings of formal mathematical theories capable of defining sequential dimensions, like time in physics, while being agnostic to human biases toward certain types of representations.

The following Theorem~\ref{thmSequentialstructure} demonstrates that once the algorithmic information content of a multidimensional network gets sufficiently high (i.e., the network topology is sufficiently incompressible), the notion of sequentially structured dimension becomes conflated with that of sequentially structured information.

\begin{theorem}\label{thmSequentialstructure}
	Let $ \mathscr{ G }_c $ be any large enough $ \delta\left( \left| \mathbb{V}( \mathscr{ G }_c ) \right| \right) $-K-random simple MAG with order $ p \geq 2 $ and
	\[ \delta\left( \left| \mathbb{V}( \mathscr{ G }_c ) \right| \right) = \mathbf{o}\left( 
	\frac{ \left| \mathbb{V}( \mathscr{ G }_c ) \right|^2 - \left| \mathbb{V}( \mathscr{ G }_c ) \right| }{ 2 }
	- \left| \mathscr{A}( \mathscr{ G }_c  )[k]  \right| \left( \frac{ \left| \mathbb{ V }( \mathscr{ G }_c  ) \right| }{ \left| \mathscr{A}( \mathscr{ G }_c  )[k]  \right| } \right)^2
	\right) \text{ ,} \]
	where
	\[
	\left| \mathscr{A}( \mathscr{ G }_c  )[k]  \right|
	=
	\frac{\left| \mathbb{ V }( \mathscr{ G }_c  ) \right| }{ \left| \mathrm{ V }( \mathscr{ G }_c ) \right| \, 
		\bigtimes\limits_{ 2 \leq h \leq p , h \neq k \leq p} \left| \mathscr{A}( \mathscr{ G }_c  )[h]  \right|
	} 
	\text{ .}
	\]
	Then,
	\begin{enumerate}[label=(\roman*)\;]
		\item there is at least one non-sequential interdimensional edge $ e \in \mathscr{ E }( \mathscr{ G }_c ) $;
		
		\item $ \mathscr{ G }_c $ is \emph{not} an algorithmically snapshot-like multidimensional network.
	\end{enumerate}
	
	\begin{proof}
		The proofs of (i) and (ii) directly follow from Equations~\eqref{eqSnapshotnetwork} or~\eqref{eqMultiplex}
		where the complete proof of Theorem~\ref{thmSimplespatialTVG} (from which these equations are obtained) can be found in Appendix~\ref{sectionProofthmSimplespatialTVG}.
		Also notice that Definition~\ref{defACS} is satisfiable via Theorem~\ref{thmSimplespatialTVG}.
	\end{proof}
\end{theorem} 

Although there may be an upper bound (or ``gradient zones'') for algorithmic information content in which the sequential dimensional structure does not correspond to the sequential information growth of a network, Theorem~\ref{thmSequentialstructure} basically says that this ceases to be the case as the algorithmic information content of the network topology gets sufficiently high.
This result demonstrates what may be considered an underlying intuition behind a fundamental mathematical phenomenon.
One knows from basic properties of AIT that the increase of algorithmic randomness corresponds to a lesser amount of formal mathematical theories being able to capture, generate, or identify structural patterns in the network topology.
Thus, those formal theories (or algorithms in general) able to for example generate networks with non-sequential interdimensional edges from algorithmically snapshot-like multidimensional network (and vice-versa) will be eventually accounted for, or excluded from such an amount of theories.

In this direction, one can employ the incompressibility of multidimensional networks (see Appendix~\ref{sectionBackground}) to demonstrate network topological properties that are analogous to those in random graphs.
However, instead of these properties being obtained with probability arbitrarily close to $1$ as in statistics-based proofs when the network size increases indefinitely, they are demonstrated to happen in fixed objects or structures from phase transitions in the asymptotic dominance as the network size gets sufficiently large. 
Then, we demonstrate in Appendix~\ref{sectionTopologicalTVG} that these network topological properties in turn imply a non-sequential dimensional structure.
This reveals that information dynamics analysis in (multidimensional) networks can be used to investigate certain topological properties not only for graphs or monoplex networks, but also to investigate the structure or features of the multidimensional space (e.g., time) that a multidimensional network can be embedded into.

Previous work in~\cite{Abrahao2020cAIDistortionsCN,Abrahao2021AIDistortionsEntropy} has demonstrated that isomorphic transformations do not preserve algorithmic information.
We argue that the investigation of the extent to which those algorithmic complexity distortions (due to isomorphisms) are able to algorithmically characterize the sequential interdimensional structure such as time of complex networks is an open problem interesting and fruitful to be pursued in \emph{future research}.

\section{Conclusions}\label{sectionConclusions}

In this work, we have studied the plain and prefix algorithmic randomness of multidimensional networks that can be formally represented by multiaspect graphs (MAGs). 
We have dealt with the lossless incompressibility of multidimensional networks, especially node-aligned undirected multilayer networks or dynamic networks.

First, we have compared time-varying graphs with other snapshot-like representations of dynamic networks. 
We have shown that incompressible snapshot-dynamic networks carry an amount of topological algorithmic information (i.e., the irreducible information necessary to computationally determine the graph-theoretic representation of the respective network) that is linearly dominated by the size of the set of time instants. 
To this end, we have applied a study of a worst-case lossless compression of the algorithmically characteristic string of the network using the theoretical tools from algorithmic information theory.

Then, after an extensive analysis of previous nomenclature and assumptions in the literature, we have shown that the same results can be applied to multiplex networks, where the set of layers plays the role of the set of time instants instead.
In this regard, we have shown that both snapshot-dynamic networks and multiplex networks are two particular and distinct cases of (algorithmically) snapshot-like multidimensional networks, but that are equivalent (except by a constant that only depends on the chosen universal programming language) in terms of algorithmic information.
In addition, we have shown that 
the maximum amount of topological information of an incompressible snapshot-like multidimensional network is much smaller than the amount of topological information of an incompressible multidimensional network.

We have also investigated some topological properties of incompressible multidimensional networks. 
To this end, we have applied previous results for incompressible MAGs. 
We have shown that these networks have very short diameter, high k-connectivity, and degrees on the order of half of the network size within a strong-asymptotically dominated standard deviation. 
Therefore, these theoretical findings relate lossless compression of multidimensional networks with their network topological properties. 
For example, these properties are expected to happen in both artificial or real-world multidimensional networks with a network topology that carries a maximal and irreducible information content.
In this way, such theoretical results may give rise to future tools that could be applied to, for instance, network summarization algorithms and the reducibility problem (i.e., the problem of finding the aggregate graph that represents the original multidimensional network and preserves its core properties during network analysis). 

Furthermore, we have also shown the presence of transtemporal or crosslayer edges (i.e., edges linking vertices at non-sequential time instants or layers) in those incompressible multidimensional networks. 
Although representations of these general forms of multidimensional networks may carry much more topological information than that of snapshot-like multidimensional networks, this presence of transtemporal or crosslayer edges may not correspond to some underlying structures of real-world networks. 
Specifically, this is the case of snapshot-dynamic networks, where transtemporal edges would not have any physical correspondence to connections that `jump across time instants'.
On the other hand, in the crosslayer case, it can make sense for some multilayer networks.
Thus, with the purpose of bringing algorithmic randomness to the context of multidimensional networks, our theoretical results suggest that estimating or analyzing both the incompressibility and the network topological properties of real-world networks cannot be taken into a universal approach without previously taking into account underlying topological constraints. 
Similarly to what we have shown for snapshot-based representations of multidimensional networks, the algorithmic randomness of certain networks may be strongly dependent on the underlying constraints in the structure of the network;
and, in turn, as we have shown, the incompressibility (i.e., algorithmic randomness) of multidimensional networks implies underlying constraints in the structure of the network. 

Our results highlight a theoretical dissonance between information and structural representation of the sequential interdimensional structure of a network:
the characterization in terms of the mathematical structure representation does not always corresponds to the characterization in terms of information.
We conclude that future research on such a dissonance between information and structural representation in mathematics is crucial to formalize and allow a mathematical analysis capable detecting sequential dimensions, like time in physics, in a manner that is agnostic to human cognitive biases toward certain types of representations, theories, or algorithms.

This study and approach is also key to move forward from statistical approaches to non-statistical challenges in the context of network science, data science, and beyond, such as those related to model generation, feature selection, data dimensionality reduction, and summarization. 
For example, in dynamic multilayer networks, one important challenge is to disentangle the cause and effect between and inside different networks over time, a problem relevant to almost every area of science where processes can be represented as networks and interactions as connections to other networks.

\section{Acknowledgments}

Felipe S. Abrah\~{a}o acknowledges support from the S\~{a}o Paulo Research Foundation (FAPESP), grants $2021$/$14501$-$8$ and $2023$/$05593$-$1$.
Authors acknowledge the partial support from CNPq through their individual grants: F. S. Abrahão (313.043/2016-7), K. Wehmuth (312599/2016-1), and A. Ziviani (308.729/2015-3). Authors acknowledge the INCT in Data Science – INCT-CiD (CNPq 465.560/2014-8). Authors also acknowledge the partial support from CAPES/STIC-AmSud (18-STIC-07), FAPESP (2015/24493-1), and FAPERJ (E-26/203.046/2017). H. Zenil acknowledges the Swedish Research Council (Vetenskapsr\r{a}det) for their support under grant No. 2015-05299.

\bibliographystyle{plain}
\bibliography{2.2.1-CompleteRefs-Felipe.bib}

\appendix

\section{Preliminary definitions and notation}\label{appendixBackground}

\subsection{Multiaspect graphs}\label{appendixBackgroundMAGs}

We directly base our definitions and notation regarding classical graphs on \cite{Diestel2017,Brandes2005a,Bollobas1998}; and regarding generalized graph representation of dyadic relations between $n$-tuples (i.e. multiaspect graphs) on \cite{Wehmuth2016b,Wehmuth2017}.

\begin{definition}\label{defMAG}
	Let $ \mathscr{G}=(\mathscr{A},\mathscr{E}) $ be a multiaspect graph (MAG), where $\mathscr{E}$ is the set of existing composite edges of the MAG and $\mathscr{A}$ is a class of sets, each of which is an \emph{aspect}. Each aspect $ \mathbf{ \sigma } \in \mathscr{A} $ is a finite set and the number of aspects $ p = | \mathscr{A} | $ is called the \emph{order} of $ \mathscr{G} $. By an immediate convention, we call a MAG with only one aspect as a \emph{first order} MAG, a MAG with two aspects as a \emph{second order} MAG and so on.  Each composite edge (or arrow) $ e \in \mathscr{E} $ may be denoted by an ordered $2p$-tuple $ ( a_1,\dots,a_p, b_1, \dots, b_p ) $, where $ a_i, b_i $ are elements of the $i$-th aspect with $ 1 \leq i \leq p = | \mathscr{A} | $. 
	
\end{definition} 

Note that $ \mathscr{A}( \mathscr{G} ) $ denotes the class of aspects of $ \mathscr{G} $ and $ \mathscr{E}( \mathscr{G} ) $ denotes the \emph{composite edge set} of $ \mathscr{G} $. We denote the $i$-th aspect of $ \mathscr{G} $ as $ \mathscr{A}( \mathscr{G} )[i] $. So, $ | \mathscr{A}( \mathscr{G} )[i] |$ denotes the number of elements in $ \mathscr{A}( \mathscr{G} )[i] $. In order to match the classical graph case, we adopt the convention of calling the elements of the first aspect of a MAG as \emph{vertices}. Therefore, we also denote the set $ \mathscr{A}( \mathscr{G} )[1] $ of elements of the first aspect of a MAG  $ \mathscr{G} $ as $ V( \mathscr{G} ) $. Thus, a vertex should not be confused with a composite vertex. The set of all \emph{composite vertices} $ \mathbf{v} $ of $ \mathscr{G} $ is defined by
\[
\mathbb{V}( \mathscr{G} ) = \bigtimes_{i=1}^{p} \mathscr{A}( \mathscr{G} )[i] 
\]
\noindent  and the set of all \emph{composite edges} $ e $ of $ \mathscr{G} $ is defined by
\[
\mathbb{E}(\mathscr{G}) = \bigtimes_{n=1}^{2p}  \mathscr{A}(G)[ { (n-1)\pmod{p} } + 1 )]  \text{ ,}
\]
\noindent so that, for every ordered pair $ ( \mathbf{u} , \mathbf{v} ) $ with $ \mathbf{u} , \mathbf{v} \in \mathbb{V}( \mathscr{G} )  $, we have $ ( \mathbf{u} , \mathbf{v} ) = e \in \mathbb{E}( \mathscr{G} )  $. Also, for every $ e \in \mathbb{E}( \mathscr{G} )  $, we have $ ( \mathbf{u} , \mathbf{v} ) = e $ such that $ \mathbf{u} , \mathbf{v} \in \mathbb{V}( \mathscr{G} )  $. Thus,
\[
\mathscr{E}( \mathscr{G} ) \subseteq \mathbb{E}(\mathscr{G})
\] 

\begin{definition}\label{defTraditionalMAG}
	We say a \emph{directed} MAG (or graph) without \emph{self-loops} is a \emph{traditional} directed MAG (or graph), denoted as $  \mathscr{G}_d = (\mathscr{A},\mathscr{E}) $.
\end{definition}

\begin{definition}\label{defSimplifiedMAG}
	We say an \emph{undirected} MAG (or graph)  without \emph{self-loops} is a \emph{simple} MAG (or graph), denoted as $  \mathscr{G}_c = (\mathscr{A},\mathscr{E}) $,
	so that the set of all possible composite edges $ \mathbb{E} $ is subjected to a restriction in the form
	\[
	\mathscr{E}( \mathscr{G}_c )  \subseteq \mathbb{E}_c( \mathscr{G}_c ) \coloneqq   \{ \{ \mathbf{u} , \mathbf{v} \} \mid \mathbf{u},\mathbf{v} \in \mathbb{V}( \mathscr{G}_c )  \} 
	\text{ ,}
	\]
	\noindent where there is $ Y \subseteq \mathbb{E}(\mathscr{G}_c)  $ such that
	\[
	\{ \mathbf{u} , \mathbf{v} \} \in \mathscr{E}( \mathscr{G}_c )  \iff 
	( \mathbf{u} , \mathbf{v} ) \in Y \, \land \, ( \mathbf{v} , \mathbf{u}) \in Y \land \mathbf{u} \neq \mathbf{v}
	\]
\end{definition}

We have directly from this Definition~\ref{defSimplifiedMAG} that
\[
\left| \mathbb{E}_c( \mathscr{G}_c )   \right| = \frac{ { \left| \mathbb{V}( \mathscr{G}_c ) \right| }^2 - { \left| \mathbb{V}( \mathscr{G}_c ) \right| } }{ 2 }
\text{ .}
\]

Concerning the presence or absence of composite edges in $ \mathscr{E}( \mathscr{G}_c  ) $, we defined the characteristic string \cite{Abrahao2018darxivandreport} of a simple MAG by previously fixing an arbitrary ordering of all possible composite edges.
However, this condition becomes unnecessary for recursively labeled families, since this ordering is already embedded in Definition~\ref{defLabeledfamilyofMAG}.
See also Lemma~\ref{lemmaLabeledfamilyofMAG}.

\begin{definition}\label{BdefCharacteristicstringofasimpleMAG}
	Let $ \left( e_1 , \dots , e_{ \left| \mathbb{E}_c( \mathscr{G}_c )   \right| } \right) $ be any arbitrarily fixed ordering of all possible composite edges of a simple MAG $ \mathscr{G}_c $.
	We say that a string $ x \in \{ 0 , 1 \}^* $ with $ l( x ) = \left| \mathbb{E}_c( \mathscr{G}_c )   \right| $ is a \emph{characteristic string} of a simple MAG $ \mathscr{G}_c $
	\textit{iff}, for every $ e_j \in \mathbb{E}_c( \mathscr{G}_c ) $,
	\[
	e_j  \in \mathscr{E}( \mathscr{G}_c  ) \iff \text{ the $j$-th digit in $x$ is $1$}
	\text{ ,}
	\]
	\noindent  where $ 1 \leq j \leq l( x ) $. 
	
\end{definition}

We define the \emph{composite diameter} of $  \mathscr{G} $ in an analogous way to diameter in classical graphs:

\begin{definition}\label{BdefCompositediameter}
	The composite diameter $ \mathrm{D}_\mathscr{E}( \mathscr{G} ) $ is the maximum value in the set of the minimum number of steps (through composite edges) in $ \mathscr{E}( \mathscr{G} ) $ necessary to reach a composite vertex $ \mathbf{v}$ from a composite vertex $ \mathbf{u} $, for any $  \mathbf{u} , \mathbf{v} \in \mathbb{V}( \mathscr{G} ) $.

\end{definition}

See also \cite{Wehmuth2016b} for paths and distances in MAGs. 
Moreover, as in \cite{Wehmuth2016b}:

\begin{definition}\label{defMAGandgraphsisomorphisms}
	We say a traditional MAG $ \mathscr{G}_d $  is \emph{isomorphic} to a traditional directed graph $ G $ when there is a bijective function $ f : \mathbb{V}( \mathscr{G}_d ) \to V( G ) $ such that an edge $ e \in \mathscr{E}( \mathscr{G}_d ) $ if, and only if, the edge $ ( f( \pi_o( e ) ) , f( \pi_d( e ) ) ) \in E( G )$, where $ \pi_o $ is a function that returns the origin vertex of an edge and the function $ \pi_d $ is a function that returns the destination vertex of an edge. 
\end{definition}

Thus, the reader may notice that the aspects in $\mathscr{A}$ determine how the set $\mathscr{E}$ will be defined and, therefore, they determine the type of network that the MAG is univocally representing: for example, a time-varying graph (TVG) as in~\cite{Costa2015a,Wehmuth2015a} or a multilayer graph as in~\cite{Wehmuth2018a}. 
In the particular case of dynamic networks, as defined in \cite{Costa2015a,Wehmuth2015a}, we have that:

\begin{definition}\label{BdefTVG} 
	Let $\mathrm{ G_t }=(\mathrm{V},\mathscr{E},\mathrm{T})$ be a second order MAG representing a \emph{time-varying graph} (TVG), where $\mathrm{V}$ is the set of vertices, $\mathrm{T}$ is the set of time instants, and $\mathscr{E} \subseteq \mathrm{V} \times \mathrm{T} \times \mathrm{V} \times \mathrm{T}$ is the set of (composite) edges. 
	We denote the set of time instants of the graph $\mathrm{ G_t }=(\mathrm{V},\mathscr{E},\mathrm{T})$ by $ \mathrm{T}(\mathrm{ G_t })=\{t_0, t_1, \dotsc, t_{|\mathrm{T}(\mathrm{ G_t })|-1} \} $. 
	Also, let $\mathrm{V}(\mathrm{ G_t })$ denote the set of vertices of $\mathrm{ G_t }$ and $|\mathrm{V}(\mathrm{ G_t })|$ denote the cardinality of the set of vertices in $\mathrm{ G_t }$.
\end{definition}

Immediately, one indeed has from Definition~\ref{defMAG} \cite{Wehmuth2016b,Wehmuth2017} that classical graphs are particular cases of MAGs, which have only one aspect:

\begin{definition}\label{defClassicgraph}
	A labeled \emph{undirected} graph $ G = ( V , E )  $ without \emph{self-loops} is a labeled graph with a restriction $ \mathbb{E}_c $ in the edge set $E$ such that each edge is an \emph{unordered pair} with
	\[
	E \subseteq \mathbb{E}_c\left(  G \right) \coloneq \{ \{ x , y \} \mid x , y \in V \}  
	\] 
	\noindent where\footnote{ That is, the adjacency matrix of this graph is symmetric and the main diagonal is null. } there is $ Y \subseteq  V \times V $ such that
	\[
	\{ x , y \} \in E \subseteq \mathbb{E}_c\left(  G \right) \iff ( x , y ) \in Y \, \land \, ( y , x ) \in Y \, \land \, x \neq y
	\]
	\noindent We also refer to these graphs as \emph{classical} (or \emph{simple labeled}) graphs.
\end{definition}

The terms \emph{vertex} and \emph{node} may be employed interchangeably in this article. 
However, note that we rather choose to use the term \emph{node} preferentially within the context of networks, where nodes may realize operations, computations or would have some kind of agency, like in real-world networks. 
Thus, we rather choose to use the term \emph{vertex} preferentially in the mathematical context of graph theory.

\subsection{Algorithmic information theory}\label{appendixBackgroundAIT}

From \cite{Li1997,Downey2010,Chaitin2004,Calude2002}, we will briefly recover in this section some of the main definitions, concepts, and notation in \emph{algorithmic information} theory (aka Kolmogorov complexity theory or Solomonoff-Kolmogorov-Chaitin complexity theory).
Let $ \{ 0 , 1 \}^* $ be the set of all finite binary strings.
Let $ l(x) $ denote the length of a finite string $ x \in  \{ 0 , 1 \}^* $.
In addition, let $ | X | $ denote the number of elements (i.e., the cardinality) in a set, if $ X $ is a finite set.
Let $ { ( x ) }_2 $ denote the string which is a binary represenation of the number $ x $. In addition, let $ (x)_{L} $ denote the representation of the number $ x \in \mathbb{N} $ in language $ L $.
Let $ \mathbf{L_U} $ denote a binary universal programming language for a universal Turing machine $\mathbf{U}$. 
Let $ \mathbf{L'_U} $ denote a binary \emph{prefix-free} (or \emph{self-delimiting}) universal programming language for a prefix universal Turing machine $\mathbf{U}$.
Let $ \left< \, \cdot \, , \, \cdot \, \right> $ denote an arbitrary recursive bijective pairing function. This notation can be recursively extended to $ \left<   \, \cdot \, ,  \left< \, \cdot \, , \, \cdot \, \right> \right> $ and, then, to an ordered tuple $ \left< \, \cdot \, , \, \cdot \,  \, , \, \cdot \,\right> $. This iteration can be recursively applied with the purpose of defining finite ordered tuples $ \left< \cdot \, , \, \dots \, , \, \cdot   \right> $.

\begin{definition}\label{BdefC}
	The (unconditional) \emph{plain} \emph{algorithmic complexity} (also known as C-complexity, plain Kolmogorov complexity, plain program-size complexity or plain Solomonoff-Komogorov-Chaitin complexity) of a finite binary string $ w $, denoted by $ C(w) $, is the length of the shortest program $w^* \in \mathbf{L_U} $ such that $ \mathbf{U}(w^*) = w $.\footnote{ Note that $ w^* $ denotes the lexicographically first $ \mathrm{p} \in \mathbf{L_U} $ such that $ l(\mathrm{p}) $ is minimum and $ \mathbf{U}(p) = w $.} The \emph{conditional} plain algorithmic complexity of a binary finite string $ y $ given a binary finite string $ x $, denoted by $ C( y \, | x ) $, is the length of the shortest program $w \in \mathbf{L_U} $ such that $ \mathbf{U}( \left< x , w \right> ) = y $. Note that $ C( y ) = C( y \, | \epsilon ) $, where $ \epsilon $ is the empty string. We also have the \emph{joint} plain algorithmic complexity of strings $x$ and $y$ denoted by $ C( x , y ) \coloneqq C( \left< x , y \right> ) $ and the \emph{C-complexity of information} in $x$ about $y$ denoted by $ I_C( x : y ) \coloneqq C(y) - C( y \, | x ) $.

	\begin{notation}\label{BdefPlaincomplexityofedgesets}
		Let $ \left( e_1 , \dots , e_{ \left| \mathbb{E}( \mathscr{G} )  \right| } \right) $ be a previously fixed ordering (or indexing) of the set $ \mathbb{E}( \mathscr{G} ) $.
		For an (composite) edge set $ \mathscr{E}( \mathscr{G} ) $, let $ C( \mathscr{E}( \mathscr{G} ) ) \coloneq C( \left< \mathscr{E}( \mathscr{G} ) \right> ) $, where $\left< \mathscr{E}( \mathscr{G} ) \right>$ denotes the (composite) edge set string
		\[
		\left<  \left< e_1, z_1 \right>, \dots , \left< e_n , z_n \right>  \right> 
		\]
		\noindent such that 
		\[
		z_i = 1  \iff e_i \in \mathscr{E}( \mathscr{G} )
		\text{ ,}
		\]
		\noindent where  $ z_i \in \{ 0 , 1 \} $ with $ 1 \leq i \leq n= | \mathbb{E}( \mathscr{G} ) | $. 
		Thus, in the simple MAG case (as in Definition~\ref{defSimplifiedMAG}) with the ordering fixed in Definition~\ref{BdefCharacteristicstringofasimpleMAG},
		we will have that $\left< \mathscr{E}( \mathscr{G}_c ) \right>$ denotes the (composite) edge set string
		\[
		\left<  \left< e_1, z_1 \right>, \dots , \left< e_n , z_n \right>  \right> 
		\]
		\noindent such that 
		\[
		z_i = 1  \iff e_i \in \mathscr{E}( \mathscr{G}_c )
		\text{ ,}
		\]
		\noindent where  $ z_i \in \{ 0 , 1 \} $ with $ 1 \leq i \leq n= | \mathbb{E}_c( \mathscr{G}_c ) | $.
		The same applies analogously to the conditional, joint, and C-complexity of information cases from Definition~\ref{BdefC}.
	\end{notation}
\end{definition}

\begin{definition}\label{BdefK}
	The (unconditional) \emph{prefix} \emph{algorithmic complexity} (also known as K-complexity, prefix Kolmogorov complexity, prefix program-size complexity or prefix Solomonoff-Komogorov-Chaitin complexity) of a finite binary string $ w $, denoted by $ K(w) $, is the length of the shortest program $w^* \in \mathbf{L'_U} $ such that $ \mathbf{U}(w^*) = w $.
	The \emph{conditional} prefix algorithmic complexity of a binary finite string $ y $ given a binary finite string $ x $, denoted by $ K( y \, | x ) $, is the length of the shortest program $w \in \mathbf{L'_U} $ such that $ \mathbf{U}( \left< x , w \right> ) = y $. Note that $ K( y ) = K( y \, | \epsilon ) $, where $ \epsilon $ is the empty string. Similarly, we have the \emph{joint} prefix algorithmic complexity of strings $x$ and $y$ denoted by $ K( x , y ) \coloneqq K( \left< x , y \right> ) $, the \emph{K-complexity of information} in $x$ about $y$ denoted by $ I_K( x : y ) \coloneqq K(y) - K( y \, | x ) $, and the \emph{mutual algorithmic information} of the two strings $x$ and $y$ denoted by $ I_A( x \, ; y ) \coloneqq K(y) - K( y \, | x^* ) $.

	\begin{notation}\label{BdefPrefixcomplexityofedgesets}
		Analogously to Notation~\ref{BdefPlaincomplexityofedgesets}, for an (composite) edge set $ \mathscr{E}( \mathscr{G} ) $, let $ K( \mathscr{E}( \mathscr{G} ) ) \coloneq K( \left< \mathscr{E}( \mathscr{G} ) \right> ) $ denote 
		\[
		K( \left<  \left< e_1, z_1 \right>, \dots , \left< e_n , z_n \right>  \right> )
		\]
		\noindent such that 
		\[
		z_i = 1  \iff e_i \in \mathscr{E}( \mathscr{G} )
		\text{ ,}
		\]
		\noindent where  $ z_i \in \{ 0 , 1 \} $ with $ 1 \leq i \leq n= | \mathbb{E}( \mathscr{G} ) | $. The same applies analogously to the simple MAG case (as in Notation~\ref{BdefPlaincomplexityofedgesets}) and the conditional, joint, K-complexity of information, and mutual cases from Definition~\ref{BdefK}.
	\end{notation}
\end{definition}

\subsection{Algorithmically random multiaspect graphs}\label{appendixBackgroundrandomMAGs}

In order to study C-randomness (i.e., plain algorithmic randomness) of simple MAGs analogously to classical graphs, first one needs to extend the concept of labeling in classical graphs to families of simple MAGs \cite{Abrahao2018darxivandreport}:

%
%

\begin{definition}\label{defLabeledfamilyofMAG}
	A family $ F_{ \mathscr{G}_c } $ of simple MAGs $ \mathscr{G}_c $ (as in Definition~\ref{defSimplifiedMAG}) is \emph{recursively labeled} \textit{iff} there are programs $ \mathrm{p'}_1 , \mathrm{p'}_2  \in \{ 0 , 1 \}^* $ such that the following hold at the same time:
	
	\begin{enumerate}[label=(\Roman{*})]
		\item\label{defLabeledfamilyofMAG1} if $ \left( a_1 , \dots , a_p \right) , \left( b_1 , \dots , b_p \right) \in \mathbb{V}\left( \mathscr{G}_c \right)$, then
		\begin{align*}
		& \mathbf{U}\left( \left< \left< a_1 , \dots , a_p \right> , \left< b_1 , \dots , b_p \right> ,  \mathrm{p'}_1   \right> \right)  = { \left( j \right) }_2
		\text{ ;}
		\end{align*}
		
		\item\label{defLabeledfamilyofMAG1.2} if $  \left( a_1 , \dots , a_p \right)  $ or $ \left( b_1 , \dots , b_p \right) $ does not belong to any $ \mathbb{V}\left( \mathscr{G}_c \right) $ with $ \mathscr{G}_c \in F_{ \mathscr{G}_c } $, then
		\begin{align*}
		& \mathbf{U}\left( \left< \left< a_1 , \dots , a_p \right> , \left< b_1 , \dots , b_p \right> ,  \mathrm{p'}_1   \right> \right)  = 0
		\text{ ;}
		\end{align*}
		
		\item\label{defLabeledfamilyofMAG2} if \[ 1 \leq j \leq \left| \mathbb{E}_c( \mathscr{G}_c )   \right| = \frac{ { \left| \mathbb{V}( \mathscr{G}_c ) \right| }^2 - { \left| \mathbb{V}( \mathscr{G}_c ) \right| } }{ 2 }  \text{ ,}\] then
		\begin{align*}
		&\mathbf{U}\left( \left<  j  , \mathrm{p'}_2  \right>  \right) = \left< \left< a_1 , \dots , a_p  \right> , \left<  b_1 , \dots , b_p \right>  \right> = { \left( e_j \right) }_2
		\text{ ;}
		\end{align*}
		
		\item\label{defLabeledfamilyofMAG.2} if \[ 1 \leq j \leq \left| \mathbb{E}_c( \mathscr{G}_c )   \right| = \frac{ { \left| \mathbb{V}( \mathscr{G}_c ) \right| }^2 - { \left| \mathbb{V}( \mathscr{G}_c ) \right| } }{ 2 }  \] does not hold for any $ \mathbb{V}\left( \mathscr{G}_c \right) $ with $ \mathscr{G}_c \in F_{ \mathscr{G}_c } $, then
		\begin{align*}
		& \mathbf{U}\left( \left<  j  , \mathrm{p'}_2  \right>  \right) = \left< \left< a_1 , \dots , a_p  \right> , \left<  b_1 , \dots , b_p \right>  \right> = \left< 0 \right>
		\text{ ;}
		\end{align*}
	\end{enumerate}
	where $ \mathscr{G}_c \in F_{ \mathscr{G}_c } $, $ a_i , b_i , j \in \mathbb{N} $, and $ 1 \leq i \leq p \in \mathbb{N}$.
	%
	
\end{definition}

Thus, in Definition~\ref{defLabeledfamilyofMAG}, note that $ \mathrm{p'}_1 $ and $ \mathrm{p'}_2 $ define a family of finite MAGs (with arbitrarily fixed order $p$) for which there is a unique ordering for all possible composite edges in the sets $ \mathbb{E}_c( \mathscr{G}_c ) $'s and this ordering does not depend on the choice of the MAG $ \mathscr{G}_c $ in the family $ F_{ \mathscr{G}_c } $.
Indeed, we will see in Lemma~\ref{lemmaLabeledfamilyofMAG} that, given an arbitrarily fixed order $p$, there is a recursively labeled infinite family that contains every possible MAG of order $p$, so that such a unique ordering of composite edges is shared by each one the member of the family.

Second, we extend the definition of plain algorithmically random classical graphs in \cite{Li1997,Buhrman1999} to simple MAGs. 
First, in the classical graph case, we have in \cite{Li1997,Buhrman1999}:

\begin{definition}\label{defC-randomgraph}
	
	A classical graph $ G $  with $ | V(G) | = n $ is $ \delta(n) $-C-random if, and only if, it satisfies
	\[
	C( E(G) \, | n ) \geq \binom{n}{2} - \delta(n)
	\text{ ,}
	\]
	\noindent where 
	\[ \myfunc{ \delta }{ \mathbb{N} }{ \mathbb{N} }{ n  }{ \delta(n) } \] 
	\noindent is the randomness deficiency function.
\end{definition} 

Then, in the simple MAG case, an analogous definition holds as introduced in \cite{Abrahao2018darxivandreport}:

\begin{definition}\label{defC-randomMAG}
	We say a simple MAG $ \mathscr{G}_c $ is $ \delta( \left| \mathbb{V}( \mathscr{G}_c ) \right| ) $-C-random \textit{iff} it satisfies
	\[
	C\left( \mathscr{E}( \mathscr{G}_c ) \, \mid \left| \mathbb{V}( \mathscr{G}_c ) \right| \right) 
	\geq 
	\binom{ \left| \mathbb{V}( \mathscr{G}_c ) \right| }{ 2 } - \delta( \left| \mathbb{V}( \mathscr{G}_c ) \right| )
	\text{ ,}
	\]
	\noindent where 
	\[ \myfunc{ \delta }{ \mathbb{N} }{ \mathbb{N} }{ n  }{ \delta(n) } \] 
	\noindent is the randomness deficiency function.
\end{definition}

With respect to weak asymptotic dominance of function $f$ by a function $g$, 
we employ the usual $\mathbf{O}( g(x) )$ for the big \textbf{O} notation when $f$ is asymptotically upper bounded by $g$; 
and with respect to strong asymptotic dominance by a function $g$, we employ the usual $\mathbf{o}( g(x) )$ when $g$ dominates $f$.

As also introduced in \cite{Abrahao2018darxivandreport}, one can also apply this same concept of algorithmic randomness to the prefix-free (or self-delimited) version, i.e., K-randomness:

\begin{definition}\label{defK-randomMAGs}
	We say a simple MAG $ \mathscr{G}_c $ is  $  \mathbf{O}(1) $-K-random \textit{iff} it satisfies
	\[
	K( \mathscr{E}( \mathscr{G}_c ) ) \geq \binom{ \left| \mathbb{V}( \mathscr{G}_c ) \right| }{ 2 } - \mathbf{O}(1)
	\]
\end{definition}

\section{Background results}\label{sectionBackground}

In this section, we briefly recover some previous results. 

\subsection{Algorithmic information theory}\label{subsectionAIT}

First of all, it is important to remember some basic and important relations in algorithmic information theory \cite{Li1997,Chaitin2004,Downey2010,Calude2009}.  

\begin{lemma}\label{lemmaBasicAIT}
	For every $ x , y \in \{ 0 , 1 \}^* $ and $  n \in \mathbb{N} $,
	\begin{align}
	\label{lemmaBasicAIT1} C( x ) & \leq l( x ) + \mathbf{O}(1) \\
	\label{lemmaBasicAIT10} K( x ) & \leq l( x ) + \mathbf{O}( \lg( l( x ) ) ) \\
	\label{lemmaBasicAIT2} C( y \, | x ) & \leq C( y ) + \mathbf{O}(1) \\
	\label{lemmaBasicAIT11} K( y \, | x ) & \leq K( y ) + \mathbf{O}(1) \\
	\label{lemmaBasicAIT3} C( y \, | \, x ) \leq K( y \, | \, x ) + \mathbf{O}(1) & \leq C( y \, | \, x ) + \mathbf{O}( \lg( C( y \, | \, x ) ) ) \\
	\label{lemmaBasicAIT4} C( x ) \leq C( x , y ) + \mathbf{O}(1) & \leq C( y ) + C( x \, | y ) + \mathbf{O}( \lg( C( x , y ) ) ) \\
	\label{lemmaBasicAIT5} K( x ) \leq K( x , y ) + \mathbf{O}(1) & \leq  \; K( y ) + K( x \, | y ) + \mathbf{O}(1) \\
	\label{lemmaBasicAIT6} C( x ) & \leq K( x ) + \mathbf{O}(1) \\
	\label{lemmaBasicAIT7} K( n ) & = \mathbf{O}( \lg( n ) ) \\
	\label{lemmaBasicAIT8} K( x ) & \leq C( x ) + K( C( x ) ) + \mathbf{O}(1) \\
	\label{lemmaBasicAIT9} I_A( x ; y ) & = I_A( y ; x ) \pm \mathbf{O}(1)
	\end{align}

\end{lemma}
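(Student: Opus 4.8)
The plan is to prove each inequality by exhibiting a short program for the appropriate universal machine and invoking the invariance theorem to absorb all simulation overhead into the additive $\mathbf{O}(1)$ constants; all eleven items are standard facts of algorithmic information theory \cite{Li1997,Downey2010,Calude2002}, so I would dispatch them in groups by technique. The \emph{coding bounds} \eqref{lemmaBasicAIT1}, \eqref{lemmaBasicAIT10}, and \eqref{lemmaBasicAIT7} come first: for \eqref{lemmaBasicAIT1} I run the identity program that copies its input to the output, giving $C(x) \leq l(x) + \mathbf{O}(1)$; for the prefix versions \eqref{lemmaBasicAIT10} and \eqref{lemmaBasicAIT7} the input must be self-delimiting, and prepending a self-delimiting encoding of the length $l(x)$ (respectively of the number of binary digits of $n$) costs $\mathbf{O}(\lg(l(x)))$ (respectively $\mathbf{O}(\lg(n))$) extra bits, yielding both bounds. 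The \emph{monotonicity} bounds \eqref{lemmaBasicAIT2} and \eqref{lemmaBasicAIT11} follow at once: a program that discards the side information $x$ and then runs an optimal unconditional program for $y$ shows that conditioning never increases complexity, up to the constant for ignoring the extra argument.

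Next I treat the \emph{plain--prefix comparisons} \eqref{lemmaBasicAIT6}, \eqref{lemmaBasicAIT3}, and \eqref{lemmaBasicAIT8}. Since every self-delimiting program is in particular an admissible plain program, simulating the prefix machine on a plain machine gives $C \leq K + \mathbf{O}(1)$, which settles \eqref{lemmaBasicAIT6} and the left inequality of \eqref{lemmaBasicAIT3}. For the reverse directions one turns an optimal plain program $p$ into a self-delimiting one by prefixing a description of its own length $l(p)$; encoding this length prefix-freely costs $\mathbf{O}(\lg(\cdot))$ bits, which gives the right inequality of \eqref{lemmaBasicAIT3}, and costs precisely $K(C(x)) + \mathbf{O}(1)$ bits when the length is exactly $C(x)$, which is the sharper bound \eqref{lemmaBasicAIT8}. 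The \emph{projection and subadditivity} bounds \eqref{lemmaBasicAIT4} and \eqref{lemmaBasicAIT5} are handled similarly: the left inequalities follow by composing an optimal program for the pair $\left< x , y \right>$ with the recursive projection onto the first coordinate, while the right inequalities concatenate a program for $y$ with a program for $x$ given $y$. In the prefix case \eqref{lemmaBasicAIT5} both pieces are self-delimiting and concatenate cleanly, so the overhead is $\mathbf{O}(1)$; in the plain case \eqref{lemmaBasicAIT4} a delimiter marking the boundary between the two pieces is needed, which accounts for the $\mathbf{O}(\lg(C(x,y)))$ term.

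I expect the genuine obstacle to be the symmetry of algorithmic information \eqref{lemmaBasicAIT9}. Unwinding the definition $I_A(x;y) = K(y) - K(y \mid x^*)$, the claim is equivalent to $K(x) + K(y \mid x^*) = K(y) + K(x \mid y^*) \pm \mathbf{O}(1)$, i.e.\ to the Kolmogorov--Levin theorem asserting that each side equals $K(x,y) \pm \mathbf{O}(1)$. The upper bound $K(x,y) \leq K(x) + K(y \mid x^*) + \mathbf{O}(1)$ is the clean self-delimiting concatenation already underlying \eqref{lemmaBasicAIT5}; the matching lower bound is the delicate part and relies on an enumeration argument that counts the strings $v$ for which $K(x,v)$ lies below a prescribed threshold. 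I would invoke this theorem directly from \cite{Li1997,Downey2010} rather than reproduce its counting proof here, since every remaining item is elementary by comparison.
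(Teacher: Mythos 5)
Your proposal is correct, and it matches the treatment the paper itself relies on: the paper states this lemma as recalled background with no proof of its own, deferring to the standard references \cite{Li1997,Downey2010,Calude2009,Chaitin2004}, and your sketches (coding bounds, monotonicity, plain--prefix comparisons via length-prefixing, concatenation for subadditivity) are exactly the arguments found there. You also correctly isolate the symmetry of information \eqref{lemmaBasicAIT9} as the only non-elementary item, resting on $K(x,y) = K(y) + K(x \mid \left< y , K(y) \right>) \pm \mathbf{O}(1)$, which is precisely the key step the paper singles out in its remark following the lemma.
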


Note that the inverse relation $ K( x , y ) + \mathbf{O}(1) \geq  \; K( y ) + K( x \, | y ) + \mathbf{O}(1)  $ does not hold in general in Equation~\eqref{lemmaBasicAIT5}. In fact, one can show that $ K( x , y ) = K( y ) + K( x \, \mid \, \left< y , K( y ) \right>  ) \pm \mathbf{O}(1) $, which is the key step to prove Equation~\eqref{lemmaBasicAIT9}.
In this way, we have that the notion of \emph{network topological (algorithmic) information} of a simple MAG $ \mathscr{G}_c $, i.e., the computably irreducible information necessary to determine/compute $ \left< \mathscr{E}( \mathscr{G}_c ) \right> $, is formally captured by $ I_A( \left< \mathscr{E}( \mathscr{G}_c ) \right> \, ; \left< \mathscr{E}( \mathscr{G}_c ) \right> ) = K( \left< \mathscr{E}( \mathscr{G}_c ) \right> ) \pm \mathbf{O}(1) $.
In the present article, wherever the concept of information is mentioned, we are in fact referring to algorithmic information; and wherever the term topological information appears, it is referring to the computably irreducible information necessary to determine/compute the composite edge set $ \mathscr{E}( \mathscr{G}_c ) $, which is structured and represented by string $ \left< \mathscr{E}( \mathscr{G}_c ) \right> $.

\subsection{Multiaspect graphs and graphs isomorphism}

In order to represent multidimensional networks, we are basing our work on a generalized graph representation of dyadic relations between $n$-tuples \cite{Wehmuth2016b,Wehmuth2017} called multiaspect graphs (MAGs). Directly from \cite{Wehmuth2016b}, one has that a simple MAG is basically equivalent to a classical (i.e., simple labeled) graph:

\begin{corollary}\label{corClassicMAGisomorphism}
	For every simple MAG $ \mathscr{G}_c $ of order $p>0$, where all aspects are non-empty sets, there is a unique (up to a graph isomorphism) classical graph $ G_{ \mathscr{G}_c } = \left( V , E \right) $ with $ | V( G ) | =  \prod\limits_{ n = 1 }^{ p } \left| \mathscr{A}( \mathscr{G}_c )[ n ] \right| $ that is isomorphic (as in Definition~\ref{defMAGandgraphsisomorphisms}) to $ \mathscr{G}_c $.
\end{corollary}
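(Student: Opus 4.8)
The plan is to exhibit the classical graph $G_{\mathscr{G}_c}$ by relabelling the composite vertices of $\mathscr{G}_c$. Since every aspect $\mathscr{A}(\mathscr{G}_c)[i]$ is a non-empty finite set, the composite vertex set $\mathbb{V}(\mathscr{G}_c)=\bigtimes_{i=1}^{p}\mathscr{A}(\mathscr{G}_c)[i]$ is a non-empty finite set whose cardinality is exactly $\prod_{n=1}^{p}\left|\mathscr{A}(\mathscr{G}_c)[n]\right|$, directly from the definition of the Cartesian product. First I would fix the vertex set $V\coloneqq\{\,1,\dots,\prod_{n=1}^{p}\left|\mathscr{A}(\mathscr{G}_c)[n]\right|\,\}$ and choose any bijection $f\colon\mathbb{V}(\mathscr{G}_c)\to V$ (for instance the one induced by the lexicographic ordering of the $p$-tuples). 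This already pins down $|V(G_{\mathscr{G}_c})|$ as required by the statement.

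Next I would transport the composite edge set along $f$, defining $E\coloneqq\{\,\{f(\mathbf{u}),f(\mathbf{v})\}\mid\{\mathbf{u},\mathbf{v}\}\in\mathscr{E}(\mathscr{G}_c)\,\}$. Because $\mathscr{G}_c$ is simple (Definition~\ref{defSimplifiedMAG}), every composite edge is an unordered pair of distinct composite vertices; as $f$ is a bijection, each image $\{f(\mathbf{u}),f(\mathbf{v})\}$ is an unordered pair of distinct vertices, so $E\subseteq\mathbb{E}_c(G_{\mathscr{G}_c})$ and $G_{\mathscr{G}_c}=(V,E)$ is a classical graph in the sense of Definition~\ref{defClassicgraph}. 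By construction $f$ satisfies $\{\mathbf{u},\mathbf{v}\}\in\mathscr{E}(\mathscr{G}_c)\iff\{f(\mathbf{u}),f(\mathbf{v})\}\in E$, which is the undirected analogue of the isomorphism condition in Definition~\ref{defMAGandgraphsisomorphisms}; to match that definition literally I would pass through the symmetric underlying set $Y$ of Definition~\ref{defSimplifiedMAG}, observing that the ordered-pair preservation induced by $f$ on $Y$ respects the symmetry $(\mathbf{u},\mathbf{v})\in Y\land(\mathbf{v},\mathbf{u})\in Y$ and hence collapses to the undirected edge relation on $E$.

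For uniqueness up to graph isomorphism, I would suppose that $G'=(V',E')$ is any classical graph isomorphic to $\mathscr{G}_c$ via some bijection $g\colon\mathbb{V}(\mathscr{G}_c)\to V'$ preserving (composite) edges in both directions. Then $g\circ f^{-1}\colon V\to V'$ is a composition of a graph isomorphism with the inverse of one, hence itself a graph isomorphism, so that $G_{\mathscr{G}_c}\cong G'$. In particular any two classical graphs meeting the statement are isomorphic, and the cardinality condition forces $|V'|=|V|$, which yields the asserted uniqueness.

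I expect the only point genuinely requiring care to be the reconciliation between the directed formulation of isomorphism in Definition~\ref{defMAGandgraphsisomorphisms} and the undirected simple objects at hand; once the symmetry of $Y$ is used to reduce ordered pairs to unordered ones, the remaining verifications (bijectivity of $f$, the cardinality count, edge preservation in both directions, and uniqueness) are routine and follow immediately from the set-theoretic definitions in Section~\ref{appendixBackgroundMAGs}.
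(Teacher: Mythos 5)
Your proof is correct and takes essentially the same route as the construction behind this corollary in the paper's cited source \cite{Wehmuth2016b}: relabel the composite vertices via a bijection onto a set of size $\prod_{n=1}^{p}\left|\mathscr{A}(\mathscr{G}_c)[n]\right|$, transport the composite edge set along that bijection (using the symmetric set $Y$ of Definition~\ref{defSimplifiedMAG} to reconcile the directed formulation of Definition~\ref{defMAGandgraphsisomorphisms} with the undirected objects), and obtain uniqueness by composing isomorphisms. The paper itself states this as a background result without reproducing a proof, and your argument fills in exactly the standard details.
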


We also have that the concepts of \emph{walk}, \emph{trail}, and \emph{path} become well-defined for MAGs analogously to within the context of graphs. See \cite[Section 3.5]{Wehmuth2016b}.

\subsection{Algorithmically random multiaspect graphs}\label{subsectionRandomMAGs}

In this section, we briefly recover some previous results on algorithmically random graphs in \cite{Abrahao2018darxivandreport}.
These are obtained by basing our previous work on algorithmically random classical (i.e., labeled simple) graphs in \cite{Buhrman1999,Zenil2018a,Khoussainov2014}.

In order to study \emph{plain} algorithmic randomness (i.e., C-randomness) of simple MAGs, we extended in \cite{Abrahao2018darxivandreport} the concept of labeling in classical graphs to simple MAGs, as restated in Definiton~\ref{defLabeledfamilyofMAG}, while avoiding the algorithmic information distortions due to isomorphic transformations---on finite networks instead of infinite ones, and for isomorphisms in general instead automorphisms only---that were missing in previous work~\cite{Buhrman1999,Khoussainov2014}.
Then, we directly extended the definition of algorithmically random classical graphs to simple MAGs as restated in Definition~\ref{defK-randomMAGs}. 
Thus, by combining Corollary~\ref{corClassicMAGisomorphism} with well-known inequalities in algorithmic information theory \cite{Li1997,Downey2010,Chaitin2004}, we showed in \cite{Abrahao2018darxivandreport} that:

\begin{theorem}\label{thmC-randomMAGsandgraphs}
	Let $  F_{ \mathscr{G}_c } \neq \emptyset  $ be an arbitrary recursively labeled family of simple MAGs $ \mathscr{G}_c $.
	Then, for every $  \mathscr{G}_c  \in  F_{ \mathscr{G}_c } $,
	\begin{center}
		$ \mathscr{G}_c$ is $ \left( \delta( \left| \mathbb{V}( \mathscr{G}_c ) \right| ) + \mathbf{O}( \log_2( \left| \mathbb{V}( \mathscr{G}_c ) \right|) ) \right) $-C-random \\ 
		\textit{iff} \\ 
		$ G $ is $ \left( \delta( \left| V( G ) \right| ) + \mathbf{O}( \log_2( \left| V( G ) \right| ) ) \right) $-C-random, 
	\end{center}
	\noindent where $ G $ is isomorphic (as in Definition~\ref{defMAGandgraphsisomorphisms}) to $ \mathscr{G}_c $.
	
\end{theorem}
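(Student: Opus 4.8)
The plan is to show that the two conditional complexities $C(\mathscr{E}(\mathscr{G}_c) \mid |\mathbb{V}(\mathscr{G}_c)|)$ and $C(E(G) \mid |V(G)|)$ differ by at most an additive $\mathbf{O}(\log_2(|V(G)|))$ term, and then to absorb that difference into the $\mathbf{O}(\log_2(\cdot))$ slack already present in both randomness-deficiency bounds. The first ingredient is Corollary~\ref{corClassicMAGisomorphism}, which supplies a classical graph $G$ isomorphic to $\mathscr{G}_c$ with $|V(G)| = \prod_{i=1}^{p} |\mathscr{A}(\mathscr{G}_c)[i]| = |\mathbb{V}(\mathscr{G}_c)|$; writing $n$ for this common value, the two binomial terms coincide, $\binom{|\mathbb{V}(\mathscr{G}_c)|}{2} = \binom{|V(G)|}{2} = \binom{n}{2}$, so only the complexities need be compared.

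First I would exhibit a constant-size translation procedure between the two edge-set strings that takes $n$ together with the aspect-size vector $(|\mathscr{A}(\mathscr{G}_c)[1]|, \ldots, |\mathscr{A}(\mathscr{G}_c)[p]|)$ as its only nonconstant input. Because the family $F_{\mathscr{G}_c}$ is recursively labeled, the programs $\mathrm{p'}_1, \mathrm{p'}_2$ of Definition~\ref{defLabeledfamilyofMAG} are fixed for the family and hence of $\mathbf{O}(1)$ size; given an index $j$, the program $\mathrm{p'}_2$ recovers the $j$-th composite edge $e_j = (\mathbf{u}, \mathbf{v})$ as a pair of $p$-tuples. The isomorphism $f : \mathbb{V}(\mathscr{G}_c) \to V(G)$ of Definition~\ref{defMAGandgraphsisomorphisms} can be realized as the mixed-radix rank of a composite vertex with respect to the aspect sizes, so $f(\mathbf{u})$ and $f(\mathbf{v})$ are computable once those sizes are known. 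Composing these maps yields the permutation sending the labeling-induced ordering of $\mathbb{E}_c(\mathscr{G}_c)$ to the canonical ordering of the $\binom{n}{2}$ possible edges of $G$, and this permutation converts the characteristic string of $\mathscr{G}_c$ into that of $G$ and back.

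From this I would read off both inequalities: feeding $\langle E(G)\rangle$, $n$, and the aspect sizes to the translator recovers $\langle \mathscr{E}(\mathscr{G}_c)\rangle$, giving $C(\mathscr{E}(\mathscr{G}_c)\mid n) \leq C(E(G)\mid n) + \mathbf{O}(\log_2 n)$; running the translator in reverse gives the symmetric bound $C(E(G)\mid n) \leq C(\mathscr{E}(\mathscr{G}_c)\mid n) + \mathbf{O}(\log_2 n)$. The $\mathbf{O}(\log_2 n)$ term in each direction is precisely the cost of encoding the $p$ aspect sizes, each bounded by $n$, with $p$ constant for the fixed family. Combining the two inequalities yields $\bigl| C(\mathscr{E}(\mathscr{G}_c)\mid n) - C(E(G)\mid n) \bigr| \leq \mathbf{O}(\log_2 n)$.

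Finally I would translate this into the randomness equivalence by unfolding Definitions~\ref{defC-randomMAG} and~\ref{defC-randomgraph}. If $\mathscr{G}_c$ is $(\delta(n) + \mathbf{O}(\log_2 n))$-C-random, then $C(\mathscr{E}(\mathscr{G}_c)\mid n) \geq \binom{n}{2} - \delta(n) - \mathbf{O}(\log_2 n)$, and the complexity comparison immediately gives $C(E(G)\mid n) \geq \binom{n}{2} - \delta(n) - \mathbf{O}(\log_2 n)$, that is, $G$ is $(\delta(n) + \mathbf{O}(\log_2 n))$-C-random; the converse is symmetric. I expect the main obstacle to be the second paragraph: verifying rigorously that the ordering-translation permutation is recursive and that reconstructing the $e_j$ and evaluating $f$ genuinely require nothing beyond the constant labeling programs and the $\mathbf{O}(\log_2 n)$ aspect-size data. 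In particular one must check that conditioning on $n$ alone---rather than on the full aspect decomposition---is exactly what forces the logarithmic correction, since distinct factorizations of $n$ into $p$ aspect sizes yield distinct composite-vertex sets and hence distinct edge orderings.
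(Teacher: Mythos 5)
Your proposal is correct and follows essentially the same route the paper attributes to this result: invoking Corollary~\ref{corClassicMAGisomorphism} to get the isomorphic graph $G$ with $|V(G)| = |\mathbb{V}(\mathscr{G}_c)|$, realizing the isomorphism by a computable (mixed-radix) vertex ranking so that the two edge-set strings are intertranslatable from the constant family-labeling programs plus $\mathbf{O}(\log_2 n)$ bits of aspect-size data, and absorbing this logarithmic transformation cost into the randomness deficiency via the standard plain-complexity inequalities. Your closing observation---that conditioning on $n$ alone rather than on the full aspect decomposition is what forces the logarithmic correction---is precisely the ``algorithmic-informational cost'' of the MAG-to-graph transformation that the paper emphasizes after the theorem statement.
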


Thus, Theorem~\ref{thmC-randomMAGsandgraphs} establishes an algorithmic-informational cost of performing a transformation of a MAG into its isomorphically correspondent graph, and vice-versa.
Although such an isomorphism (as in Definition~\ref{defMAGandgraphsisomorphisms}) is an abstract mathematical equivalence, when dealing with structured data representations of objects (e.g., as the strings $ \left< \mathscr{E}( \mathscr{G}_c ) \right> $ or $ \left< E( G ) \right> $), this equivalence must always be performed by computable procedures. 
Therefore, there is always an associated necessary algorithmic information, which corresponds to the length of the minimum program that performs these transformations.
Theorem~\ref{thmC-randomMAGsandgraphs} gives a worst-case algorithmic-informational cost in a logarithmic order of the network size for plain algorithmic complexity (i.e., for non-prefix-free universal programming languages).
However, even in the prefix case (as in Definition~\ref{BdefK}), we also showed in \cite{Abrahao2018darxivandreport} that there is a algorithmic-informational cost upper bounded by a positive constant.

We also showed in \cite{Abrahao2018darxivandreport} that, given an arbitrarily fixed order $p$, there is a recursively labeled (see Definition~\ref{defLabeledfamilyofMAG}) infinite family that contains every possible MAG of order $p$:

\begin{lemma}\label{lemmaLabeledfamilyofMAG}
	There is a recursively labeled infinite family $ F_{ \mathscr{G}_c } $ of simple MAGs $ \mathscr{G}_c $ with arbitrary symmetric adjacency matrix such that every one of them has the same order $p$. 
\end{lemma}

Thus, Lemma~\ref{lemmaLabeledfamilyofMAG} ensures that one can follow the same ordering (or indexing) for the set $ \mathbb{E}_c( \mathscr{G}_c ) $, where this ordering does not depend on the choice of $ \mathscr{G}_c $, so that every possible MAG of order $p$ with vertex labels in $ \mathbb{N} $ belongs to such an infinite family satisfying Lemma~\ref{lemmaLabeledfamilyofMAG}.

In addition, the basic algorithmic-informational properties of the binary string that determines the presence or absence of a composite  edge in Definition~\ref{BdefCharacteristicstringofasimpleMAG} (i.e., the characteristic string) can be properly defined for recursively labeled families:

\begin{corollary}\label{corFamilyoflabeledMAGandstrings}
	Let $  F_{ \mathscr{G}_c }   $ be a recursively labeled family of simple MAGs $ \mathscr{G}_c $.
	Then, for every $ \mathscr{G}_c  \in F_{ \mathscr{G}_c } $ and $ x \in \{ 0 , 1 \}^* $,
	where $ x $ is the characteristic string of $  \mathscr{G}_c  $, the following relations hold:
	\begin{align}
		\label{corFamilyoflabeledMAGandstrings1} C( \mathscr{E}( \mathscr{G}_c  ) \, | \, x  ) \leq K( \mathscr{E}( \mathscr{G}_c  ) \, | \, x  ) + \mathbf{O}(1)  & =   \mathbf{O}(1)  \\
		\label{corFamilyoflabeledMAGandstrings2} C( x \, | \, \mathscr{E}( \mathscr{G}_c  )  )  \leq K( x \, | \, \mathscr{E}( \mathscr{G}_c  )  ) + \mathbf{O}(1)  & =   \mathbf{O}(1) \\
		\label{corFamilyoflabeledMAGandstrings3} K( x ) & = K( \mathscr{E}( \mathscr{G}_c  ) )  \pm \mathbf{O}\left( 1 \right) \\
		\label{corFamilyoflabeledMAGandstrings4} I_A( x ; \mathscr{E}( \mathscr{G}_c  ) ) = I_A( \mathscr{E}( \mathscr{G}_c  ) ; x ) \pm \mathbf{O}(1) & = K( x ) - \mathbf{O}\left( 1 \right) = K( \mathscr{E}( \mathscr{G}_c  ) ) \pm \mathbf{O}(1)
	\end{align}
\end{corollary}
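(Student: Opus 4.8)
The plan is to exploit the fact that the characteristic string $x$ of Definition~\ref{BdefCharacteristicstringofasimpleMAG} and the composite edge set string $\left< \mathscr{E}( \mathscr{G}_c ) \right>$ of Notation~\ref{BdefPlaincomplexityofedgesets} encode exactly the same presence/absence data, merely in two different formats: $x$ is the raw bit sequence $z_1 z_2 \dots z_n$, whereas $\left< \mathscr{E}( \mathscr{G}_c ) \right>$ additionally spells out each composite edge $e_i$ alongside its bit $z_i$. Because $F_{ \mathscr{G}_c }$ is recursively labeled, the programs $\mathrm{p'}_1 , \mathrm{p'}_2$ of Definition~\ref{defLabeledfamilyofMAG} furnish a fixed-size dictionary between indices and composite edges that is shared by every member of the family. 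I would therefore first prove the two ``translation'' bounds $K( \mathscr{E}( \mathscr{G}_c ) \mid x ) = \mathbf{O}(1)$ and $K( x \mid \mathscr{E}( \mathscr{G}_c ) ) = \mathbf{O}(1)$, from which relations~\eqref{corFamilyoflabeledMAGandstrings1} and~\eqref{corFamilyoflabeledMAGandstrings2} follow after invoking $C( \cdot \mid \cdot ) \leq K( \cdot \mid \cdot ) + \mathbf{O}(1)$ from Equation~\eqref{lemmaBasicAIT3}. Relations~\eqref{corFamilyoflabeledMAGandstrings3} and~\eqref{corFamilyoflabeledMAGandstrings4} then reduce to routine manipulations.

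For the direction $K( x \mid \mathscr{E}( \mathscr{G}_c ) ) = \mathbf{O}(1)$, a single constant-length prefix program suffices: given $\left< \mathscr{E}( \mathscr{G}_c ) \right> = \left< \left< e_1, z_1 \right>, \dots, \left< e_n, z_n \right> \right>$, it decodes the tuple, reads off the bits $z_1, \dots, z_n$ in order, and outputs $x = z_1 \dots z_n$. For the converse $K( \mathscr{E}( \mathscr{G}_c ) \mid x ) = \mathbf{O}(1)$, I would hard-wire $\mathrm{p'}_2$ into a fixed program that, on input $x$, sets $n = l(x)$ (available since $x$ is supplied in full), computes $e_j = \mathbf{U}( \left< j , \mathrm{p'}_2 \right> )$ for each $1 \leq j \leq n$ by clause~\ref{defLabeledfamilyofMAG2} of Definition~\ref{defLabeledfamilyofMAG}, pairs each $e_j$ with the bit $z_j$ read as the $j$-th symbol of $x$, and assembles $\left< \left< e_1, z_1 \right>, \dots, \left< e_n, z_n \right> \right>$. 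Since $\mathrm{p'}_2$ has fixed length and the shared ordering guarantees the recovered $e_j$ are the correct composite edges for this $\mathscr{G}_c$, the program length is a constant depending only on the family, giving $\mathbf{O}(1)$.

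With these two bounds in hand, relation~\eqref{corFamilyoflabeledMAGandstrings3} follows from the subadditivity chain in Equation~\eqref{lemmaBasicAIT5}: taking $y = \mathscr{E}( \mathscr{G}_c )$ gives $K( x ) \leq K( \mathscr{E}( \mathscr{G}_c ) ) + K( x \mid \mathscr{E}( \mathscr{G}_c ) ) + \mathbf{O}(1) = K( \mathscr{E}( \mathscr{G}_c ) ) + \mathbf{O}(1)$, and the symmetric choice together with $K( \mathscr{E}( \mathscr{G}_c ) \mid x ) = \mathbf{O}(1)$ yields the reverse inequality, so $K( x ) = K( \mathscr{E}( \mathscr{G}_c ) ) \pm \mathbf{O}(1)$. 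For relation~\eqref{corFamilyoflabeledMAGandstrings4}, I would expand the mutual information by its definition in Definition~\ref{BdefK}, namely $I_A( x ; \mathscr{E}( \mathscr{G}_c ) ) = K( \mathscr{E}( \mathscr{G}_c ) ) - K( \mathscr{E}( \mathscr{G}_c ) \mid x^* )$; since $x$ is computable from $x^*$ we have $K( \mathscr{E}( \mathscr{G}_c ) \mid x^* ) \leq K( \mathscr{E}( \mathscr{G}_c ) \mid x ) + \mathbf{O}(1) = \mathbf{O}(1)$, so this equals $K( \mathscr{E}( \mathscr{G}_c ) ) - \mathbf{O}(1)$, which by relation~\eqref{corFamilyoflabeledMAGandstrings3} is $K( x ) \pm \mathbf{O}(1)$; the symmetry $I_A( x ; \mathscr{E}( \mathscr{G}_c ) ) = I_A( \mathscr{E}( \mathscr{G}_c ) ; x ) \pm \mathbf{O}(1)$ is then immediate from Equation~\eqref{lemmaBasicAIT9}.

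The main obstacle, and the only place requiring real care, is the converse translation $K( \mathscr{E}( \mathscr{G}_c ) \mid x ) = \mathbf{O}(1)$: one must ensure that \emph{all} dependence on the particular MAG is carried by $x$ itself and that nothing MAG-specific leaks into the program length. This is exactly what recursive labeling buys us — because the index-to-edge map $\mathrm{p'}_2$ is the same fixed object for every $\mathscr{G}_c \in F_{ \mathscr{G}_c }$ (Lemma~\ref{lemmaLabeledfamilyofMAG}), the reconstruction program is genuinely constant across the family, and knowing $n = l(x)$ in the conditional setting removes any need to encode the MAG's size separately.
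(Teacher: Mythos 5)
Your proposal is correct and follows essentially the same route as the paper's: a Turing equivalence between the characteristic string $x$ and the edge set string $\left< \mathscr{E}( \mathscr{G}_c ) \right>$ realized by constant-size programs that hard-wire the family's recursive-labeling programs (exactly as the programs $p$ and $q$ built inside the proof of Theorem~\ref{thmSimplespatialTVG}), followed by the standard inequalities~\eqref{lemmaBasicAIT3}, \eqref{lemmaBasicAIT5}, and~\eqref{lemmaBasicAIT9}. The only cosmetic difference is that for $K( x \, | \, \mathscr{E}( \mathscr{G}_c ) ) = \mathbf{O}(1)$ you read the bits $z_1 \dots z_n$ directly off the tuple, whereas the paper's program $q$ invokes $\mathrm{p'}_1$; both are valid since the edge set string already lists the edges in the family's fixed ordering.
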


In summary, Theorem~\ref{thmC-randomMAGsandgraphs} shows that the plain algorithmic complexity of simple MAGs and of its respective isomorphic classical graphs are roughly the same, except for the amount of algorithmic information necessary to encode the length of the program that performs this isomorphism on an arbitrary universal Turing machine. 
In fact, regarding the connections through composite edges, this shows that 
not only ``most'' of the network topological properties of such graph are inherited by the MAG (and vice-versa), but also ``most'' of those that derives from the graph's topological incompressibility.  
That is, more formally, every network topological property regarding the connections through composite edges that derives from the MAG $ \mathscr{G}_c $ being $ \left( \delta( \left| \mathbb{V}( \mathscr{G}_c ) \right| ) + \mathbf{O}( \log_2( \left| \mathbb{V}( \mathscr{G}_c ) \right|) ) \right) $-C-random is inherited by $ \mathscr{G}_c $ from its isomorphic graph $G$, if $ G $ is  $ \left( \delta( \left| V( G ) \right| ) + \mathbf{O}( \log_2( \left| V( G ) \right| ) ) \right) $-C-random and $  \left| V( G ) \right| $ is large enough. 
And the inverse inheritance also holds.
For example, we extended in \cite{Abrahao2018darxivandreport} some results from \cite{Li1997,Buhrman1999} on plain algorithmically random classical graphs to simple MAGs:

\begin{corollary}\label{corC-randomMAGs}
	Let $  F_{ \mathscr{G}_c } $ be an arbitrary recursively labeled infinite family of simple MAGs $ \mathscr{G}_c $.
	Then, the following hold for large enough $ \mathscr{G}_c  \in F_{ \mathscr{G}_c } $:
	\begin{enumerate}
		
		\item The degree $ \mathbf{d}( \mathbf{v} ) $ of a composite vertex $ \mathbf{v} \in \mathbb{V}( \mathscr{G}_c ) $ in a $ \delta( \left| \mathbb{V}( \mathscr{G}_c ) \right|  ) $-C-random MAG $  \mathscr{G}_c \in F_{ \mathscr{G}_c } $ satisfies
		\[
		\left| \mathbf{d}( \mathbf{v} ) - \left( \frac{ \left| \mathbb{V}( \mathscr{G}_c ) \right| - 1 }{ 2 } \right) \right| 
		= 
		\mathbf{O}\left( \sqrt{ \left| \mathbb{V}( \mathscr{G}_c ) \right| \, \left( \delta( \left| \mathbb{V}( \mathscr{G}_c ) \right| ) + \mathbf{O}( \log_2( \left| \mathbb{V}( \mathscr{G}_c ) \right|) )  \right) } \right)
		\]
		\label{corC-randomMAGs2}
		
		\item $ \mathbf{o}( \left| \mathbb{V}( \mathscr{G}_c ) \right|  ) $-C-random MAGs $  \mathscr{G}_c \in F_{ \mathscr{G}_c } $ have 
		\[ 
		\frac{\left| \mathbb{V}( \mathscr{G}_c ) \right| }{4} \pm \mathbf{o}(\left| \mathbb{V}( \mathscr{G}_c ) \right| )
		\]
		\noindent disjoint paths of length 2 between each pair of composite vertices $ \mathbf{u} , \mathbf{v} \in \mathbb{V}( \mathscr{G}_c ) $. 
		In particular, $ \mathbf{o}( \left| \mathbb{V}( \mathscr{G}_c ) \right|  ) $-C-random MAGs $  \mathscr{G}_c \in F_{ \mathscr{G}_c } $ have composite diameter $2$.
		\label{corC-randomMAGs3}

	\end{enumerate}
	
\end{corollary}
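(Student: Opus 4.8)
The plan is to prove both items by reducing them to the corresponding facts about plain algorithmically random classical graphs and then transporting those facts across the MAG--graph isomorphism. Concretely, by Corollary~\ref{corClassicMAGisomorphism} every simple MAG $\mathscr{G}_c$ of order $p>0$ with non-empty aspects is isomorphic to a unique classical graph $G$ with $|V(G)| = |\mathbb{V}(\mathscr{G}_c)|$; write $n = |\mathbb{V}(\mathscr{G}_c)|$. This isomorphism (Definition~\ref{defMAGandgraphsisomorphisms}) is a bijection between composite vertices and vertices that preserves adjacency, hence it preserves vertex degree, the number of length-$2$ paths joining a given pair, and distances --- the last because walks, trails, and paths are well-defined and match under the isomorphism (see Section~3.5 of \cite{Wehmuth2016b}), so the composite diameter of Definition~\ref{BdefCompositediameter} equals the ordinary diameter of $G$. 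The bridge carrying the randomness hypothesis across this isomorphism is Theorem~\ref{thmC-randomMAGsandgraphs}: $\mathscr{G}_c$ is $(\delta(n)+\mathbf{O}(\log_2 n))$-C-random if and only if $G$ is $(\delta(|V(G)|)+\mathbf{O}(\log_2 |V(G)|))$-C-random.

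For item~1, I would first establish the classical degree-concentration bound for $\delta(n)$-C-random graphs, as in \cite{Li1997,Buhrman1999}. The argument is an incompressibility one: if a vertex $v$ of $G$ had degree deviating from $(n-1)/2$ by $k$, then its row of the adjacency matrix is one of only $\binom{n-1}{(n-1)/2 \pm k}$ strings, and encoding that row by its index in an enumeration of such strings --- together with $v$ and $k$ in $\mathbf{O}(\log_2 n)$ bits --- would compress $E(G)$ by $\Theta(k^2/n) - \mathbf{O}(\log_2 n)$ bits. C-randomness caps this saving by the deficiency, forcing $k = \mathbf{O}\!\left(\sqrt{n(\delta(n)+\mathbf{O}(\log_2 n))}\right)$. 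Applying this to $G$ and then pulling the estimate back through the degree-preserving isomorphism, with the effective deficiency supplied by Theorem~\ref{thmC-randomMAGsandgraphs}, yields exactly the stated bound for $\mathbf{d}(\mathbf{v})$.

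For item~2, I would run the analogous estimate on the number of common neighbours of a pair $u,v$ (equivalently, the number of internally disjoint length-$2$ $u$--$v$ paths), which for $\delta(n)$-C-random graphs concentrates at $n/4$; in the regime $\delta(n) = \mathbf{o}(n)$ the deviation is $\mathbf{o}(n)$, and since each common neighbour determines a distinct length-$2$ path, one obtains $n/4 \pm \mathbf{o}(n)$ disjoint such paths, whence in particular at least one common neighbour and therefore diameter $2$. Transporting through the isomorphism, and noting that the $\mathbf{O}(\log_2 n)$ transfer cost from Theorem~\ref{thmC-randomMAGsandgraphs} is itself $\mathbf{o}(n)$ and so is absorbed, delivers the claimed count of disjoint composite length-$2$ paths and composite diameter $2$ for $\mathbf{o}(n)$-C-random $\mathscr{G}_c$.

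The main obstacle is the bookkeeping of randomness deficiency across the reduction. One must verify that the additive $\mathbf{O}(\log_2 n)$ term introduced by Theorem~\ref{thmC-randomMAGsandgraphs} when moving between $\mathscr{G}_c$ and $G$ is folded into the deficiency \emph{before} the classical concentration estimates are applied --- so that it reappears inside the square root in item~1 and is swallowed by the $\mathbf{o}(n)$ slack in item~2 --- rather than being double-counted or dropped. A secondary point of care is that the classical incompressibility estimates hold only for large enough $n$ and require the $\mathbf{O}(\log_2 n)$ overhead of naming a vertex (and, in item~2, a pair of vertices) to be strictly dominated by the concentration gain; this is precisely why both items are asserted only for sufficiently large $\mathscr{G}_c \in F_{\mathscr{G}_c}$.
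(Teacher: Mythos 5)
Your proposal is correct and matches the paper's own route: the paper likewise obtains this corollary by transferring C-randomness between $\mathscr{G}_c$ and its isomorphic classical graph $G$ via Theorem~\ref{thmC-randomMAGsandgraphs} (folding the $\mathbf{O}(\log_2 |\mathbb{V}(\mathscr{G}_c)|)$ isomorphism cost into the deficiency) and then invoking the classical incompressibility results of \cite{Buhrman1999,Li1997} on degree concentration, disjoint length-$2$ paths, and diameter. Your sketch of the underlying counting/compression arguments and the bookkeeping of the logarithmic term is exactly the inheritance argument the paper describes.
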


Thus, an incompressible simple MAG under randomness deficiency $ \delta( \left| \mathbb{V}( \mathscr{G}_c ) \right|  )  = \mathbf{o}(\left| \mathbb{V}( \mathscr{G}_c ) \right| ) $ tends to be an expected ``almost regular'' graph in the limit when the network size increases indefinitely; 
for sufficiently large set of composite vertices, these MAGs also cross a phase transition in which the diameter between composite vertices becomes $2$;
with respect to k-connectivity, as defined in \cite{Buhrman1999}, they are $ \frac{\left| \mathbb{V}( \mathscr{G}_c ) \right| }{4} \pm \mathbf{o}(\left| \mathbb{V}( \mathscr{G}_c ) \right| ) $-connected. 

Furthermore, from Definition~\ref{defK-randomMAGs},
one can build an infinite family of simple MAGs in which every member is $ \mathbf{O}( 1 ) $-K-random and, in turn, also retrieve their plain algorithmically randomness \cite{Abrahao2018darxivandreport}: 

\begin{lemma}\label{lemmaK-randomMAGs}
	There is a recursively labeled \emph{infinite} family $ F_{ \mathscr{G}_c } $ of simple MAGs $ \mathscr{G}_c $ that are $ \mathbf{O}(1) $-K-random.  
\end{lemma}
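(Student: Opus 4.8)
The plan is to build the family by selecting, for infinitely many sizes, a simple MAG whose characteristic string is prefix-incompressible, and then to transfer that incompressibility from the string to the composite edge set using the background results already established. First I would invoke Lemma~\ref{lemmaLabeledfamilyofMAG} to fix a recursively labeled infinite family $F_0$ of simple MAGs of some fixed order $p$ that contains \emph{every} possible MAG of order $p$ with arbitrary symmetric adjacency matrix. This supplies a computable, MAG-independent indexing of all possible composite edges and hence, via Definition~\ref{BdefCharacteristicstringofasimpleMAG}, a bijective correspondence between binary strings $x$ of length $\binom{n}{2}$ and the members of $F_0$ whose composite vertex count is $n$, for each achievable $n = \prod_{i=1}^{p} \left| \mathscr{A}( \mathscr{G}_c )[i] \right|$.

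Second, I would apply the standard counting argument for prefix complexity. Since $\mathbf{L'_U}$ is prefix-free, the number of programs of length strictly less than $m$ is at most $2^m - 1$, so fewer than $2^m$ strings satisfy $K(x) < m$; as there are exactly $2^m$ strings of length $m$, at least one string $x$ of each length $m$ satisfies $K(x) \geq m = l(x)$. Applying this to $m = \binom{n}{2}$ for the infinitely many achievable values of $n$ yields, for each such $n$, a characteristic string $x_n$ with $K(x_n) \geq l(x_n)$ and thus a corresponding simple MAG $\mathscr{G}_c^{(n)} \in F_0$. Collecting these members into $F_{ \mathscr{G}_c } \subseteq F_0$ produces an infinite family.

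Third, I would transfer the incompressibility to the edge set. By Equation~\eqref{corFamilyoflabeledMAGandstrings3} of Corollary~\ref{corFamilyoflabeledMAGandstrings}, each member satisfies $K( \mathscr{E}( \mathscr{G}_c^{(n)} ) ) = K( x_n ) \pm \mathbf{O}(1)$, where the hidden constant depends only on the labeling programs $\mathrm{p'}_1, \mathrm{p'}_2$ and is therefore uniform over the whole recursively labeled family. Combining this with $K(x_n) \geq l(x_n) = \binom{n}{2}$ gives $K( \mathscr{E}( \mathscr{G}_c^{(n)} ) ) \geq \binom{n}{2} - \mathbf{O}(1)$, which is exactly the defining inequality of $\mathbf{O}(1)$-K-randomness in Definition~\ref{defK-randomMAGs}, with one and the same constant for every member. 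It remains to check that the selected sub-family is recursively labeled in the sense of Definition~\ref{defLabeledfamilyofMAG}: the inherited programs $\mathrm{p'}_1, \mathrm{p'}_2$ still compute composite-edge indices correctly, and since $F_{ \mathscr{G}_c }$ contains members of arbitrarily large size, every composite vertex occurring in $F_0$ still occurs in some member of $F_{ \mathscr{G}_c }$, so the ``does not belong to any member'' clauses of Definition~\ref{defLabeledfamilyofMAG} remain valid.

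The main obstacle I expect is the uniformity bookkeeping in the third step: one must ensure the $\mathbf{O}(1)$ constant in the randomness bound genuinely does not grow with $n$ across the infinite family. This rests entirely on the uniformity already encoded in Corollary~\ref{corFamilyoflabeledMAGandstrings} (whose $\mathbf{O}(1)$ terms depend only on $\mathrm{p'}_1, \mathrm{p'}_2$), together with the verification that restricting $F_0$ to the chosen sub-family preserves the recursive-labeling clauses — the counting argument and the string-to-MAG correspondence being otherwise entirely routine.
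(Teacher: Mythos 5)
Your core strategy is sound but follows a genuinely different route from the paper's. The paper (as indicated by Theorem~\ref{thmK-randomandC-randomMAGs} and the discussion preceding Corollary~\ref{corTranstemporaledges}, with the proof in the companion reference) obtains the family by taking the characteristic strings of the members to be initial segments of a \emph{single} K-random real number, e.g., Chaitin's halting probability $\Omega$: the property $K( \Omega \upharpoonright_n ) \geq n - \mathbf{O}(1)$ hands you one constant valid simultaneously for all members, and it moreover ties the family to a concrete object that Theorem~\ref{thmK-randomandC-randomMAGs} then exploits to also extract $\mathbf{O}( \log_2( \left| \mathbb{V}( \mathscr{G}_c ) \right| ) )$-C-randomness. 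You instead use the counting argument (fewer than $2^m$ strings satisfy $K(x) < m$, so some string of each length $m$ has $K(x) \geq m$) and transfer incompressibility to the composite edge set via Equation~\eqref{corFamilyoflabeledMAGandstrings3} of Corollary~\ref{corFamilyoflabeledMAGandstrings}; your observation that the uniformity of the final $\mathbf{O}(1)$ constant rests solely on the family-wide programs $\mathrm{p'}_1 , \mathrm{p'}_2$ is correct and is exactly what makes the transfer legitimate. Your route is more elementary and self-contained; the paper's buys a canonical family and the bridge to C-randomness in one stroke.

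There is, however, one step whose justification fails as written: the claim that, because $F_{ \mathscr{G}_c }$ contains members of arbitrarily large size, every composite vertex occurring in $F_0$ occurs in some member of $F_{ \mathscr{G}_c }$, so that clauses \ref{defLabeledfamilyofMAG1.2} and \ref{defLabeledfamilyofMAG.2} of Definition~\ref{defLabeledfamilyofMAG} survive the restriction. Arbitrarily large composite-vertex count $\left| \mathbb{V}( \mathscr{G}_c ) \right| = \prod_i \left| \mathscr{A}( \mathscr{G}_c )[i] \right|$ does not imply that every aspect grows: for order $p = 2$ you could (inadvertently) have selected only members with aspect cardinalities $(n,1)$, in which case composite vertices $(a,b)$ with $b \geq 2$ never occur in your subfamily, and the inherited $\mathrm{p'}_1$ returns a nonzero index where clause \ref{defLabeledfamilyofMAG1.2} demands $0$. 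A related imprecision is your claim of a \emph{bijective} correspondence between strings of length $\binom{n}{2}$ and members with composite-vertex count $n$: several aspect profiles can share the same product $n$. Both issues are repaired by making the selection explicit: for each $m$, take the incompressible string of length $\binom{m^p}{2}$ and realize it as the characteristic string (Definition~\ref{BdefCharacteristicstringofasimpleMAG}) of the member all of whose aspects equal $\{1, \dots, m\}$. Then every aspect cardinality grows unboundedly, the union of the composite-vertex sets over your subfamily coincides with that of $F_0$, the negative clauses of Definition~\ref{defLabeledfamilyofMAG} hold verbatim for the inherited programs, and the rest of your argument goes through unchanged.
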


\begin{theorem}\label{thmK-randomandC-randomMAGs}
	Let $  F_{ \mathscr{G}_c }   $ be a recursively labeled infinite family of simple MAGs $ \mathscr{G}_c $ such that, for every $ \mathscr{G}_c  \in F_{ \mathscr{G}_c }  $ and $ n \in \mathbb{N} $, if $ x \upharpoonright_n $ is its characteristic string and $ n = \left| \mathbb{E}_c( \mathscr{G}_c )   \right| $,
	then  $ x \in \left[ 0 , 1 \right] \subset \mathbb{R} $ is $ \mathbf{O}(1) $-K-random.  
	Thus, every  MAG $ \mathscr{G}_c  \in F_{ \mathscr{G}_c } $ is $ \mathbf{O}( \log_2( \left| \mathbb{V}( \mathscr{G}_c ) \right| ) ) $-C-random and $ \mathbf{O}(1) $-K-random.
	In addition, there is such a family $  F_{ \mathscr{G}_c }   $ with $ \Omega = x \in \left[ 0 , 1 \right] \subset \mathbb{R}  $, where $ \Omega $ is the halting probability \cite{Chaitin2004,Calude2002}.
\end{theorem}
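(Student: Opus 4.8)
The plan is to prove the two assertions separately: first the implication (a family whose characteristic strings are nested prefixes of an $\mathbf{O}(1)$-K-random real forces every member MAG to be simultaneously $\mathbf{O}(\log_2 |\mathbb{V}|)$-C-random and $\mathbf{O}(1)$-K-random), and then the existence claim by instantiating $x = \Omega$. For the implication I would start from the length identity supplied by Definition~\ref{defSimplifiedMAG}, namely $n = |\mathbb{E}_c(\mathscr{G}_c)| = \binom{|\mathbb{V}(\mathscr{G}_c)|}{2}$, so that the characteristic string $x\upharpoonright_n$ of $\mathscr{G}_c$ is exactly the length-$\binom{|\mathbb{V}(\mathscr{G}_c)|}{2}$ prefix of $x$. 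Since $x$ is $\mathbf{O}(1)$-K-random as a real, every such prefix obeys $K(x\upharpoonright_n) \geq n - \mathbf{O}(1)$. I would then invoke Corollary~\ref{corFamilyoflabeledMAGandstrings}, specifically Eq.~\eqref{corFamilyoflabeledMAGandstrings3}, which gives $K(\mathscr{E}(\mathscr{G}_c)) = K(x\upharpoonright_n) \pm \mathbf{O}(1)$ because $x\upharpoonright_n$ is the characteristic string of $\mathscr{G}_c$ inside a recursively labeled family. Chaining these yields $K(\mathscr{E}(\mathscr{G}_c)) \geq \binom{|\mathbb{V}(\mathscr{G}_c)|}{2} - \mathbf{O}(1)$, which is precisely $\mathbf{O}(1)$-K-randomness of $\mathscr{G}_c$ in the sense of Definition~\ref{defK-randomMAGs}.

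To pass from prefix to plain complexity I would use the bridge inequality Eq.~\eqref{lemmaBasicAIT8}, $K(w) \leq C(w) + K(C(w)) + \mathbf{O}(1)$, with $w = \left< \mathscr{E}(\mathscr{G}_c) \right>$. Since $C(\mathscr{E}(\mathscr{G}_c)) \leq K(\mathscr{E}(\mathscr{G}_c)) + \mathbf{O}(1)$ by Eq.~\eqref{lemmaBasicAIT6} is of order $\binom{|\mathbb{V}(\mathscr{G}_c)|}{2}$, the correction term satisfies $K(C(\mathscr{E}(\mathscr{G}_c))) = \mathbf{O}(\log_2 |\mathbb{V}(\mathscr{G}_c)|)$ by Eq.~\eqref{lemmaBasicAIT7}; rearranging against the K-lower bound just obtained gives $C(\mathscr{E}(\mathscr{G}_c)) \geq \binom{|\mathbb{V}(\mathscr{G}_c)|}{2} - \mathbf{O}(\log_2 |\mathbb{V}(\mathscr{G}_c)|)$. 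Finally, because $|\mathbb{V}(\mathscr{G}_c)|$ can be supplied by a self-delimiting program of $\mathbf{O}(\log_2 |\mathbb{V}(\mathscr{G}_c)|)$ bits, any program computing $\mathscr{E}(\mathscr{G}_c)$ from $|\mathbb{V}(\mathscr{G}_c)|$ converts into an unconditional one at that additive cost, so $C(\mathscr{E}(\mathscr{G}_c) \mid |\mathbb{V}(\mathscr{G}_c)|) \geq C(\mathscr{E}(\mathscr{G}_c)) - \mathbf{O}(\log_2 |\mathbb{V}(\mathscr{G}_c)|) \geq \binom{|\mathbb{V}(\mathscr{G}_c)|}{2} - \mathbf{O}(\log_2 |\mathbb{V}(\mathscr{G}_c)|)$, which is $\mathbf{O}(\log_2 |\mathbb{V}(\mathscr{G}_c)|)$-C-randomness by Definition~\ref{defC-randomMAG}.

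For the existence claim I would set $x = \Omega$ and realize $F_{\mathscr{G}_c}$ as a subfamily of the family furnished by Lemma~\ref{lemmaLabeledfamilyofMAG}, which contains every simple MAG of some fixed order $p$ under one shared ordering of composite edges. For each of the infinitely many achievable values $m = |\mathbb{V}|$ I would keep the unique MAG, among those with $\binom{m}{2}$ possible composite edges, whose characteristic string equals $\Omega\upharpoonright_{\binom{m}{2}}$; these are nested initial segments of $\Omega$, so the hypothesis of the theorem holds with $x = \Omega$, and $\Omega$ is $\mathbf{O}(1)$-K-random by Chaitin's theorem \cite{Chaitin2004,Calude2002}. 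The implication proved above then immediately delivers the stated C- and K-randomness of every member. The step I expect to need the most care — and the main conceptual obstacle — is justifying that this selection is still \emph{recursively labeled}: the resolution is that Definition~\ref{defLabeledfamilyofMAG} only requires the computable indexing programs $\mathrm{p'}_1, \mathrm{p'}_2$, which are inherited unchanged from Lemma~\ref{lemmaLabeledfamilyofMAG} and depend solely on the aspect sizes, not on which edges are present. Thus recursive labeling concerns the computability of the vertex/edge \emph{indexing}, whereas the presence or absence of edges encodes $\Omega$; the two are orthogonal, so the uncomputability of $\Omega$ is no obstruction to the family being recursively labeled.
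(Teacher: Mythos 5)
The paper never reprints a proof of Theorem~\ref{thmK-randomandC-randomMAGs} (it is imported as a background result from \cite{Abrahao2018darxivandreport}), but your argument reconstructs precisely the route that the paper's surrounding machinery points to, and the main implication is correct: the prefix lower bound $K(x\upharpoonright_n)\geq n-\mathbf{O}(1)$ for an $\mathbf{O}(1)$-K-random real, transferred to $K(\mathscr{E}(\mathscr{G}_c))$ through Eq.~\eqref{corFamilyoflabeledMAGandstrings3} (whose constant is uniform over the family, as needed), then converted to the conditional plain bound via Eqs.~\eqref{lemmaBasicAIT8} and~\eqref{lemmaBasicAIT7} together with the $\mathbf{O}(\log_2\left|\mathbb{V}(\mathscr{G}_c)\right|)$ cost of hard-coding $\left|\mathbb{V}(\mathscr{G}_c)\right|$ into an unconditional program.

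There is, however, a genuine (though repairable) gap in your existence construction. Your orthogonality claim --- that recursive labeling only concerns the indexing programs $\mathrm{p'}_1,\mathrm{p'}_2$ and is therefore inherited unchanged by any subfamily --- covers the first and third clauses of Definition~\ref{defLabeledfamilyofMAG}, but not the second and fourth, which quantify \emph{negatively} over the whole family: they demand that $\mathrm{p'}_1$ (resp.\ $\mathrm{p'}_2$) output $0$ (resp.\ $\left<0\right>$) on every input that does not correspond to \emph{any} member of the family. Passing to a subfamily shrinks the collection of composite vertex sets, so a composite vertex pair lying in a discarded $\mathbb{V}$ would still be assigned a nonzero index by $\mathrm{p'}_1$, violating the second clause for the subfamily. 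Your selection rule has exactly this defect, because it keeps one MAG per \emph{cardinality} $m=\left|\mathbb{V}\right|$, while for a fixed order $p\geq 2$ several distinct aspect-size profiles (hence distinct composite vertex sets) share the same $\left|\mathbb{V}\right|$ --- e.g., $12=3\cdot 4=2\cdot 6$ for $p=2$; for the same reason, ``the unique MAG whose characteristic string equals $\Omega\upharpoonright_{\binom{m}{2}}$'' is not unique. The repair is immediate and preserves your overall plan: for \emph{every} composite vertex set $\mathbb{V}$ occurring in the family of Lemma~\ref{lemmaLabeledfamilyofMAG}, keep the one MAG on that $\mathbb{V}$ whose characteristic string is $\Omega\upharpoonright_{\binom{\left|\mathbb{V}\right|}{2}}$. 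Then the union of composite vertex sets is unchanged, all four clauses of Definition~\ref{defLabeledfamilyofMAG} hold with the same $\mathrm{p'}_1,\mathrm{p'}_2$, the family is infinite, and every member's characteristic string is an initial segment of $\Omega$, which is $\mathbf{O}(1)$-K-random \cite{Chaitin2004,Calude2002}; your first part then yields the stated C- and K-randomness of every member.
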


\section{Proof of Theorem 2.1}\label{sectionProofthmSimplespatialTVG}

\begin{theorem}[Theorem~\ref{thmSimplespatialTVG}]\label{SIthmSimplespatialTVG}
	Let $ \mathrm{G'_t} = (\mathrm{V},\mathscr{E},\mathrm{T}) $ be a simple spatial TVG that belongs to a recursively labeled infinite family $ F_{ G'_t } $ of simple TVGs.
	Then, there is a binary string $ y  \in \{ 0 , 1 \}^*  $ that is an \emph{algorithmically characteristic string} of $ \mathrm{G'_t} $ such that
	\[
	K( y ) 
	\leq
	l(y) 
	+ \mathbf{O}(1)
	\leq 
	\left| \mathrm{T}( \mathrm{G'_t} ) \right| \left( \frac{  \left| \mathrm{V}( \mathrm{G'_t} ) \right|^2 - \left| \mathrm{V}( \mathrm{G'_t} ) \right| }{ 2 } \right)  
	+ 
	\mathbf{O}\left( \log_2\left( \left| \mathrm{V}( \mathrm{G'_t} ) \right| \, \left| \mathrm{T}( \mathrm{G'_t} ) \right| \right) \right)
	\text{ ,}
	\]
	\[
	K( x ) 
	\leq 
	K(y) + \mathbf{O}( 1 )
	\text{ ,}
	\]
	\[
	K( x  \mid {y}^* ) \leq K( x  \mid {y} ) + \mathbf{O}\left( 1 \right) \leq \mathbf{O}\left( 1 \right)
	\text{ ,}
	\]
	\[
	K( y ) \leq K(x) + \mathbf{O}( 1 )
	\text{ ,}
	\]
	and
	\[
	K( y \mid x^* ) \leq K( y \mid x ) + \mathbf{O}( 1 ) \leq \mathbf{O}( 1 )
	\]
	hold, where $x$ is the characteristic string of $ \mathrm{G'_t} $.
	
	\begin{proof}
		The main idea of the proof is based on showing a Turing equivalence between the string $y$ and its respective characteristic string $x$.
		Then, as in the proof of Corollary~\ref{corFamilyoflabeledMAGandstrings} presented in \cite{Abrahao2018darxivandreport}, we will recover the Turing equivalence between the characteristic string $ x $ and the string $ \left< \mathscr{E}( \mathrm{G'_t} ) \right>  $.
		So, first let $ y' \in \{ 0 , 1 \}^*  $ be any arbitrary binary string with
		\[
		l( y' )  =  \left| \mathrm{T}( \mathrm{G'_t} ) \right| \left( \frac{  \left| \mathrm{V}( \mathrm{G'_t} ) \right|^2 - \left| \mathrm{V}( \mathrm{G'_t} ) \right| }{ 2 } \right)  
		\]
		\noindent Let $ \mathrm{p'}_2 \in \{ 0 , 1 \}^* $ be a binary string that represents an algorithm running on a prefix universal Turing machine $ \mathbf{U} $ that takes $ j \in \mathbb{N} $ as input and returns the $ j\text{-th} $ edge in $ \mathbb{E}_c( \mathrm{G'_t} ) $. The existence of such $ \mathrm{p'}_2 $ is guaranteed by the definition of recursively labeling in \cite{Abrahao2018darxivandreport} (see Definition~\ref{defLabeledfamilyofMAG}). Moreover, $ \mathrm{p'}_2 $ is independent of the choice of $ \mathrm{G'_t} $ in the family $  F_{ \mathrm{G'_t} }   $. 
		Let $ s_1 \in \{ 0 , 1 \}^* $ be a binary string that represents an algorithm running on a prefix universal Turing machine $ \mathbf{U} $ that:
		\begin{enumerate}
			\item takes $ \mathrm{p'}_2 $ , $ y' $ and $ j  $ as inputs; 
			\item calculates $ \mathbf{U}( \left< j' , \mathrm{p'}_2 \right> )  $ for every $ j' \leq j $; 
			\item enumerates all the spatial edges $ e_{j'} =  \mathbf{U}( \left< j' , \mathrm{p'}_2 \right> )  $ as a subsequence of the sequence of possible undirected edges in $ \left( e_1 , \dots , e_j \right) $; 
			\item and returns:\footnote{ Alternatively, in the case the sequential couplings were not excluded in the representation of the snapshot-dynamic network, one can add here a clause also returning $1$ if $ \mathbf{U}( \left< j , \mathrm{p'}_2 \right> )  $ is a sequential coupling. The reader is invited to note that the theorem holds anyway.} 
			\begin{enumerate}
				\item $0$, if $ \mathbf{U}( \left< j , \mathrm{p'}_2 \right> )  $ is not a spatial edge; 
				\item $1$, if $ \mathbf{U}( \left< j , \mathrm{p'}_2 \right> )  $ is the $i$-th spatial edge and the $i$-th digit in $ y' $ is $1$; 
				\item $0$, if $ \mathbf{U}( \left< j , \mathrm{p'}_2 \right> )  $ is the $i$-th spatial edge and the $i$-th digit in $ y' $ is $0$.
			\end{enumerate}
		\end{enumerate} 
		Note that deciding whether an edge $ e $ is spatial or not follows directly from deciding whether $ t_u = t_v $ or not in $ e = ( u , t_u , v , t_v ) $, which is a decidable (and computationally cheap) procedure.
		Let $ s_2 \in \{ 0 , 1 \}^* $ be a binary string that represents an algorithm running on a prefix universal Turing machine  $ \mathbf{U} $ that: 
		
		\begin{enumerate}
			\item takes $s_1$, $ \mathrm{p'}_2 $ , $ y' $ and 
			\[
			\left( \frac{  \left( \left| \mathrm{V}( \mathrm{G'_t} ) \right| \, \left| \mathrm{T}( \mathrm{G'_t} ) \right| \right)^2 - \left| \mathrm{V}( \mathrm{G'_t} ) \right| \, \left| \mathrm{T}( \mathrm{G'_t} ) \right| }{ 2 } \right)
			\]
			\noindent as inputs; 
			\item calculates $ \mathbf{U}( \left< j , y' , \mathrm{p'}_2 , s_1 \right> )   $ for every $j$ with 
			\[ 1 \leq  j \leq \left( \frac{  \left( \left| \mathrm{V}( \mathrm{G'_t} ) \right| \, \left| \mathrm{T}( \mathrm{G'_t} ) \right| \right)^2 - \left| \mathrm{V}( \mathrm{G'_t} ) \right| \, \left| \mathrm{T}( \mathrm{G'_t} ) \right| }{ 2 } \right) \text{ ;}\] 
			\noindent 
			\item and returns the binary string
			\[
			x =  z_1 \, \dots \, z_n  
			\]
			\noindent such that
			\[
			z_i = 1  \iff \mathbf{U}( \left< i , y' , \mathrm{p'}_2 , s_1 \right> )  = 1
			\text{ ,}
			\]
			\noindent where  $ z_i \in \{ 0 , 1 \} $ with $ 1 \leq i \leq n = \left( \frac{  \left( \left| \mathrm{V}( \mathrm{G'_t} ) \right| \, \left| \mathrm{T}( \mathrm{G'_t} ) \right| \right)^2 - \left| \mathrm{V}( \mathrm{G'_t} ) \right| \, \left| \mathrm{T}( \mathrm{G'_t} ) \right| }{ 2 } \right) $.
		\end{enumerate}
		Now, let $  y = \left< k , y' , \mathrm{p'}_2 , s_1 , s_2 \right> $,  where $k$ is  the self-delimiting binary representation of 
		\[  \left( \frac{  \left( \left| \mathrm{V}( \mathrm{G'_t} ) \right| \, \left| \mathrm{T}( \mathrm{G'_t} ) \right| \right)^2 - \left| \mathrm{V}( \mathrm{G'_t} ) \right| \, \left| \mathrm{T}( \mathrm{G'_t} ) \right| }{ 2 } \right) \in \mathbb{N}  \text{ .}\]
		We know one can encode $k$ in $ \mathbf{O}\left( \log_2\left(  \left( \left| \mathrm{V}( \mathrm{G'_t} ) \right| \, \left| \mathrm{T}( \mathrm{G'_t} ) \right| \right)^2  \right) \right) $ bits.
		Therefore, since $ \mathrm{p'}_2 $, $ s_1 $ , and $ s_2 $ are fixed and independent of the choice of $ \mathrm{G'_t} $, we will have that, by the minimality of the prefix algorithmic complexity,
		\begin{equation*}
		\begin{aligned}
			K( x ) 
			\leq 
			K(y) + \mathbf{O}( 1 )
			& \leq 
			l( \left< k , y' , \mathrm{p'}_2 , s_1 , s_2 \right>  )  
			+ 
			\mathbf{O}( 1 ) 
			\leq \\
			& \leq 
			l( y' ) + \mathbf{O}\left( \log_2\left(  \left( \left| \mathrm{V}( \mathrm{G'_t} ) \right| \, \left| \mathrm{T}( \mathrm{G'_t} ) \right| \right)^2  \right) \right)
		\end{aligned}
		\end{equation*}
		\noindent and
		\[
		K( x  \mid {y}^* ) \leq K( x  \mid {y} ) + \mathbf{O}\left( 1 \right) \leq \mathbf{O}\left( 1 \right)
		\text{ .}
		\]
		On the other hand, in order to show that $ K( y ) \leq K( x ) + \mathbf{O}( 1 ) $,
		\noindent let $ s_3 \in \{ 0 , 1 \}^* $ be a binary string that represents an algorithm running on a prefix universal Turing machine $ \mathbf{U} $ that:
		
		\begin{enumerate}
			\item takes $ \mathrm{p'}_2 $, $ s_1 $ , $ s_2 $, and $x$ as inputs; 
			\item enumerates all the spatial edges using program $ \mathrm{p'}_2 $; 
			\item and builds the binary string $ y' = y'_1 \, \dots \, y'_{k'} $ 
			such that: 
			\begin{enumerate}
				\item $ y'_i = 1 $, if $j$ corresponds to the $i$-th spatial edge and the $j$-th digit of x is $1$;  
				\item $ y'_i = 0 $, if $j$ corresponds to the $i$-th spatial edge and the $j$-th digit of x is $0$;
			\end{enumerate}
			\item finally, $ s_3 $ returns the binary string $ \left< k , y' , \mathrm{p'}_2 , s_1 , s_2 \right>  = y $. 
		\end{enumerate}
		Note that $ x $ was already given as input and
		\[
		l(x) 
		= 
		\left( \frac{  \left( \left| \mathrm{V}( \mathrm{G'_t} ) \right| \, \left| \mathrm{T}( \mathrm{G'_t} ) \right| \right)^2 - \left| \mathrm{V}( \mathrm{G'_t} ) \right| \, \left| \mathrm{T}( \mathrm{G'_t} ) \right| }{ 2 } \right)  \text{ .}
		\] 
		Therefore, since $ \mathrm{p'}_2 $, $ s_1 $ , $ s_2 $, and $ s_3 $ are fixed and independent of the choice of $ \mathrm{G'_t} $, we will have that, by the minimality of the prefix algorithmic complexity,		
		\[
		K( y )  = K( \left< k , y' , \mathrm{p'}_2 , s_1 , s_2 \right>  )\leq K( \left< x , \mathrm{p'}_2 , s_1 , s_2, s_3 \right>  ) + \mathbf{O}( 1 )  \leq K(x) + \mathbf{O}( 1 )
		\]
		\noindent and
		\[
		K( y \mid x^* ) \leq K( y \mid x ) + \mathbf{O}( 1 ) \leq \mathbf{O}( 1 )
		\]
		Now, 
		let $ p $ be a binary string that represents the algorithm running on a universal Turing machine that: 
		\begin{enumerate}
			\item receives the string $x$ as its input;
			\item for $ 1 \leq j \leq l(x) $, reads each $j$-th bit of $x$;
			\item calculates $   \mathbf{U}\left( \left<  j  , \mathrm{p'}_2  \right> \right)  $;
			\item and, from the outputs $ e_j $ of these programs $   \left<  j  , \mathrm{p'}_2  \right>  $, returns the string $ \left<  \left< e_1, z_1 \right>, \dots , \left< e_n , z_n \right>  \right> $, where: $ z_j = 1 $, if the $j$-th bit of $x$ is $1$; and $ z_j = 0 $, if the $j$-th bit of $x$ is $0$.
		\end{enumerate} 
		Thus, since $ \mathrm{p'}_2 $ is fixed, we will have that there is a self-delimiting binary encoding of $   p  $ such that
		\[
		\mathbf{U}\left(  \left< x , p \right> \right) = \left<  \left< e_1, z_1 \right>, \dots , \left< e_n , z_n \right>  \right>
		\]
		holds. 
		\noindent Then, by the minimality of $ K( \cdot ) $, we will have that
		\[
		K( \mathscr{E}( \mathscr{G}_c  ) \, | \, x  ) \leq  	l\left(  p  \right) \leq  \mathbf{O}(1)
		\]
		Analogously to program $p$, using program $ \mathrm{p'}_1 $ instead of $ \mathrm{p'}_2 $ in order to build the string $x$ from $ \left< \mathscr{E}( \mathscr{G}_c  ) \right>  $, we will have another program $q$ such that  
		there is a self-delimiting binary encoding of $   q  $ such that
		\[
		\mathbf{U}\left(   \left< \left<  \left< e_1, z_1 \right>, \dots , \left< e_n , z_n \right>  \right> , q \right>  \right) = x
		\]
		holds and, by the minimality of $ K( \cdot ) $, we have that
		\[
		K( x  \, | \, \mathscr{E}( \mathscr{G}_c  ) ) \leq  	l\left(  q  \right) \leq  \mathbf{O}(1)
		\]
		Note that $ p , q , \mathrm{p'}_1 , \mathrm{p'}_2 , s_1 , s_2, s_3 $ are fixed.
		Therefore, in order to finish the proof, just let $ \mathrm{p''}_1 $ be the binary string that represents the algorithm running on a universal Turing machine $ \mathbf{U} $ that receives $y$ as input and returns the value of
		\[ \mathbf{U}( \left< \mathbf{U}( y ) , p \right> )  = \left<  \left< e_1, z_1 \right>, \dots , \left< e_n , z_n \right>  \right> = \left< \mathscr{E}( \mathscr{G}_c  ) \right> \text{ .}\]
		Similarly,  let $ \mathrm{p''}_2 $ be the binary string that represents the algorithm running on a universal Turing machine $ \mathbf{U} $ that receives 
		$ \left< \mathscr{E}( \mathscr{G}_c  ) \right> $
		as input and returns the value of
		\[ \mathbf{U}(  \left< \mathbf{U}\left(   \left< \left< \mathscr{E}( \mathscr{G}_c  ) \right> , q \right>  \right) , \mathrm{p'}_2 , s_1 , s_2, s_3 \right> )  = y \text{ .}\]
		
%
	\end{proof}
\end{theorem}

\section{Multidimensional degree, connectivity, diameter, and non-sequential interdimensional edges}\label{sectionTopologicalTVG}

As we will show in this section, although an incompressible general multidimensional network can carry much more information in its topology than any algorithmically snapshot-like multidimensional network---see Sections~\ref{subsectionSnapshotDynamic}, \ref{subsectionMultiplex}, and \ref{sectionAlgorithmicsnapshot}---, it displays some properties that may not be seen in real-world multidimensional networks, e.g., in those that can be univocally represented by algorithmically snapshot-like multidimensional networks.

First, since a TVG is just a second order MAG, it is immediate to show in Corollary~\ref{corC-randomTVGtopologicalproperties} that the previously studied Corollary~\ref{corC-randomMAGs}, which holds for arbitrary order $p$ \cite{Abrahao2018darxivandreport}, also applies to simple TVGs.
Thus, for the sake of exemplification, we start with the dynamic case.
Then, we generalize to the multidimensional case.

\begin{corollary}\label{corC-randomTVGtopologicalproperties}
	Let $ F_{ \mathrm{ G_t } }  $ be a recursively labeled infinite family $  F_{ \mathrm{ G_t } } $ of simple TVGs $ \mathrm{ G_t } $ that are $ \mathbf{O}( \log_2( \left| \mathbb{V}( \mathrm{ G_t } ) \right| ) ) $-C-random.
	Then, the following hold for large enough $ \mathrm{ G_t } \in F_{ \mathrm{ G_t } } $, where $ \mathbb{V}( \mathrm{ G_t } ) = \mathrm{V}( \mathrm{ G_t } ) \times \mathrm{T}( \mathrm{ G_t } )  $:
	\begin{enumerate}
		\item The degree $ \mathbf{d}( \mathbf{v} ) $ of a composite vertex $ \mathbf{v} \in \mathbb{V}( \mathrm{ G_t } ) $ in a MAG $  \mathrm{ G_t } \in F_{ \mathrm{ G_t } } $ satisfies
		\[
		\left| \mathbf{d}( \mathbf{v} ) - \left( \frac{ \left| \mathbb{V}( \mathrm{ G_t } ) \right| - 1 }{ 2 } \right) \right| 
		= 
		\mathbf{O}\left( \sqrt{ \left| \mathbb{V}( \mathrm{ G_t } ) \right| \, \left(  \mathbf{O}( \log_2( \left| \mathbb{V}( \mathrm{ G_t } ) \right|) ) \right) } \right) \text{ .}
		\]
		\label{corK-randomMAGsproperties1}
		
		\item $  \mathrm{ G_t }  $ has \[ 
		\frac{\left| \mathbb{V}( \mathrm{ G_t } ) \right| }{4} \pm \mathbf{o}(\left| \mathbb{V}( \mathrm{ G_t } ) \right| )
		\] disjoint paths of length 2 between each pair of composite vertices $ \mathbf{u} , \mathbf{v} \in \mathbb{V}( \mathrm{ G_t } ) $. 
		\label{corK-randomMAGsproperties2}
		
		\item $  \mathrm{ G_t } $ has (composite) diameter $2$.
		\label{corK-randomMAGsproperties3}

	\end{enumerate}
	
\end{corollary}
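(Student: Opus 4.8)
The plan is to recognize the statement as a direct specialization of Corollary~\ref{corC-randomMAGs} to order $p = 2$. By Definition~\ref{BdefTVG}, a simple TVG $\mathrm{ G_t }$ is a second order MAG whose two aspects are the vertex set $\mathrm{V}( \mathrm{ G_t } )$ and the time-instant set $\mathrm{T}( \mathrm{ G_t } )$, so that its composite vertex set is exactly $\mathbb{V}( \mathrm{ G_t } ) = \mathrm{V}( \mathrm{ G_t } ) \times \mathrm{T}( \mathrm{ G_t } )$, as stated. Since Corollary~\ref{corC-randomMAGs} is formulated for an arbitrary recursively labeled infinite family of simple MAGs of \emph{any} order $p$, and since the existence of such a family of simple TVGs is guaranteed by Lemma~\ref{lemmaLabeledfamilyofMAG}, I would simply instantiate that corollary with the randomness-deficiency function $\delta( |\mathbb{V}( \mathrm{ G_t } )| ) = \mathbf{O}( \log_2( |\mathbb{V}( \mathrm{ G_t } )| ) )$, which is precisely the hypothesis under which $\mathrm{ G_t }$ is assumed to be C-random here.

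For the degree-concentration claim, I would invoke the first part of Corollary~\ref{corC-randomMAGs} verbatim. Its bound reads $\mathbf{O}\left(\sqrt{|\mathbb{V}( \mathrm{ G_t } )|\left(\delta(|\mathbb{V}( \mathrm{ G_t } )|) + \mathbf{O}(\log_2(|\mathbb{V}( \mathrm{ G_t } )|))\right)}\right)$; substituting $\delta(|\mathbb{V}( \mathrm{ G_t } )|) = \mathbf{O}(\log_2(|\mathbb{V}( \mathrm{ G_t } )|))$ collapses the two logarithmic contributions into a single $\mathbf{O}(\log_2(|\mathbb{V}( \mathrm{ G_t } )|))$ term, yielding exactly the claimed estimate.

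For the internally disjoint length-$2$ paths and the composite diameter, I would first observe that $\mathbf{O}(\log_2(|\mathbb{V}( \mathrm{ G_t } )|)) = \mathbf{o}(|\mathbb{V}( \mathrm{ G_t } )|)$, since any logarithmic function is strongly dominated by the identity. Hence every $\mathbf{O}(\log_2(|\mathbb{V}( \mathrm{ G_t } )|))$-C-random TVG is in particular $\mathbf{o}(|\mathbb{V}( \mathrm{ G_t } )|)$-C-random, which is exactly the hypothesis of the second part of Corollary~\ref{corC-randomMAGs}. Applying that part delivers both the count of $\frac{|\mathbb{V}( \mathrm{ G_t } )|}{4} \pm \mathbf{o}(|\mathbb{V}( \mathrm{ G_t } )|)$ disjoint paths of length $2$ between each pair of composite vertices and, consequently, the composite diameter $2$.

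Because the argument is a pure transfer, there is no substantive obstacle; the only point requiring care is the bookkeeping that the notions of composite vertex, composite degree, and composite diameter used in Corollary~\ref{corC-randomMAGs} coincide, under the identification $\mathbb{V}( \mathrm{ G_t } ) = \mathrm{V}( \mathrm{ G_t } ) \times \mathrm{T}( \mathrm{ G_t } )$, with those appearing in the present statement, together with confirming that the two asymptotic simplifications above (the collapse of the logarithmic terms and $\log_2(n) = \mathbf{o}(n)$) are legitimate. Once these routine identifications are made, the three claims follow immediately and for large enough $\mathrm{ G_t }$.
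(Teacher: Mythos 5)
Your proposal is correct and matches the paper's own (implicit) argument: the paper likewise treats the corollary as an immediate specialization of Corollary~\ref{corC-randomMAGs} to second order MAGs with $\delta(\left|\mathbb{V}(\mathrm{G_t})\right|) = \mathbf{O}(\log_2(\left|\mathbb{V}(\mathrm{G_t})\right|))$, noting that this deficiency is $\mathbf{o}(\left|\mathbb{V}(\mathrm{G_t})\right|)$ so the path-count and diameter claims follow. Your bookkeeping of the identification $\mathbb{V}(\mathrm{G_t}) = \mathrm{V}(\mathrm{G_t}) \times \mathrm{T}(\mathrm{G_t})$ and the collapse of the logarithmic terms is exactly what the paper relies on.
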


From Lemma~\ref{lemmaK-randomMAGs}, it is also immediate that there is an incompressible TVG that satisfies the conditions of Corollary~\ref{corC-randomMAGs}. 
In fact, this follows from Theorem~\ref{thmK-randomandC-randomMAGs} by assuming a family of initial segments of a K-random (i.e., prefix algorithmically random) real number. 
For example, one may take initial segments of the \emph{halting probability} (or Chaitin's constant) \cite{Abrahao2018darxivandreport}. 
Thus, from Lemma~\ref{lemmaK-randomMAGs}, Corollary~\ref{corFamilyoflabeledMAGandstrings}, and Theorem~\ref{thmK-randomandC-randomMAGs}, we have that Corollary~\ref{corC-randomMAGs} is satisfiable with $ \delta( \left| \mathbb{V}( \mathscr{G}_c ) \right|  )   = \mathbf{O}( \log_2( \left| \mathbb{V}( \mathrm{ G_t } ) \right| ) ) $ and, therefore, resulting in the satisfiability of Corollary~\ref{corC-randomTVGtopologicalproperties} .

%
The short composite diameter and high $k$-connectivity of a $ \mathbf{O}( \log_2( \left| \mathbb{V}( \mathrm{ G_t } ) \right| ) ) $-C-random simple TVG ensures the existence of transtemporal edges in $ \mathrm{ G_t } $:

\begin{corollary}\label{corTranstemporaledges}
	Let $ \mathrm{ G_t } $ be any large enough simple TVG satisfying Corollary~\ref{corC-randomTVGtopologicalproperties} with 
	\[ 
	\left(
	\frac{1}{ 16 } - \frac{ \mathbf{o}( \left| \mathrm{V}( \mathrm{ G_t } ) \times \mathrm{T}( \mathrm{ G_t } )  \right| ) }{ 4 \, \left| \mathrm{V}( \mathrm{ G_t } ) \times \mathrm{T}( \mathrm{ G_t } )  \right| }
	\right)^{-1}
	=
	\mathbf{o}
	\left( 	
	\left| \mathrm{ T }( \mathrm{ G_t }  ) \right|
	\right) 
	\text{ .}
	\]
	Then, between every pair of vertices $ u , v \in \mathrm{ V }( \mathrm{ G_t }  ) $ and time instants $ t_i , t_j \in \mathrm{ T }( \mathrm{ G_t }  ) $ with $ j > i+2 $, there is at least one transtemporal edge $ e \in \mathscr{ E }( \mathrm{ G_t } ) $.
	
	\begin{proof}
		A particular case of Theorem~\ref{thmTransaspectedges}.
	\end{proof}
	
\end{corollary}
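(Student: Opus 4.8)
The plan is to argue directly from the composite-vertex description of $\mathrm{G_t}$, combining the abundance of length-$2$ paths supplied by Corollary~\ref{corC-randomTVGtopologicalproperties} with an elementary disjointness observation about time indices. Fix $u,v\in\mathrm{V}(\mathrm{G_t})$ and $t_i,t_j\in\mathrm{T}(\mathrm{G_t})$ with $j>i+2$, and set $\mathbf{u}=(u,t_i)$ and $\mathbf{v}=(v,t_j)$ in $\mathbb{V}(\mathrm{G_t})=\mathrm{V}(\mathrm{G_t})\times\mathrm{T}(\mathrm{G_t})$. Every length-$2$ path between $\mathbf{u}$ and $\mathbf{v}$ runs through a single intermediate composite vertex $(w,t_k)$ and consists of the two composite edges $(u,t_i,w,t_k)$ and $(w,t_k,v,t_j)$; so it suffices to produce one such path at least one of whose edges is transtemporal.

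First I would record the defining condition: a composite edge $(a,t_p,b,t_q)$ \emph{fails} to be transtemporal precisely when $q\in\{p-1,p,p+1\}$. Hence a length-$2$ path through $(w,t_k)$ contains \emph{no} transtemporal edge if and only if $k\in\{i-1,i,i+1\}$ \emph{and} $k\in\{j-1,j,j+1\}$ hold at once. Because $j>i+2$ forces $j-1>i+1$, the two windows $\{i-1,i,i+1\}$ and $\{j-1,j,j+1\}$ are disjoint, so no index $k$ meets both constraints. Consequently \emph{every} length-$2$ path between $\mathbf{u}$ and $\mathbf{v}$ must use at least one transtemporal edge, and the whole problem collapses to guaranteeing that one such path is present.

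For existence I would invoke item~\ref{corK-randomMAGsproperties2} of Corollary~\ref{corC-randomTVGtopologicalproperties}, which yields $\frac{\left|\mathbb{V}(\mathrm{G_t})\right|}{4}\pm\mathbf{o}(\left|\mathbb{V}(\mathrm{G_t})\right|)$ disjoint length-$2$ paths between $\mathbf{u}$ and $\mathbf{v}$. Writing $N$ for this count and using $\left|\mathbb{V}(\mathrm{G_t})\right|=\left|\mathrm{V}(\mathrm{G_t})\right|\,\left|\mathrm{T}(\mathrm{G_t})\right|$, the stated hypothesis $\left(\frac{1}{8}-\frac{\mathbf{o}(\left|\mathbb{V}(\mathrm{G_t})\right|)}{2\left|\mathbb{V}(\mathrm{G_t})\right|}\right)^{-1}=\mathbf{o}(\left|\mathrm{T}(\mathrm{G_t})\right|)$ is, after dividing $N$ by $2\left|\mathbb{V}(\mathrm{G_t})\right|$ and inverting, exactly the assertion that $\frac{2\left|\mathrm{V}(\mathrm{G_t})\right|}{N}=\mathbf{o}(1)$, i.e. that $N$ grows strictly faster than $2\left|\mathrm{V}(\mathrm{G_t})\right|$ (equivalently $\left|\mathrm{T}(\mathrm{G_t})\right|\to\infty$). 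In particular $N>0$ for all large enough $\mathrm{G_t}$, so at least one of these disjoint paths exists; picking any one and applying the previous paragraph produces the required transtemporal edge $e\in\mathscr{E}(\mathrm{G_t})$.

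The step I expect to be most delicate is the bookkeeping of the $\mathbf{o}$-terms, so that the hypothesis genuinely forces a positive and large number of admissible disjoint paths uniformly in the choice of $(u,v,t_i,t_j)$; I would discharge this by passing through the isomorphic-graph picture of Corollary~\ref{corClassicMAGisomorphism} exactly as the disjoint-path count is established in the proof of Corollary~\ref{corC-randomMAGs}. The disjoint-window observation is what makes the $j>i+2$ regime clean: for the borderline value $j=i+2$ the two windows would overlap at $k=i+1$, so a path through $t_{i+1}$ could avoid transtemporal edges, and one would instead have to show that the at most $\left|\mathrm{V}(\mathrm{G_t})\right|$ paths routed through that single bad intermediate time cannot absorb all $\frac{\left|\mathbb{V}(\mathrm{G_t})\right|}{4}\pm\mathbf{o}(\left|\mathbb{V}(\mathrm{G_t})\right|)$ disjoint paths. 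That counting is precisely the mechanism of the general Theorem~\ref{thmTransaspectedges}, and the present statement is its $j>i+2$ specialisation, which is why the hypothesis is phrased in terms of $N$ dominating a multiple of $\left|\mathrm{V}(\mathrm{G_t})\right|$.
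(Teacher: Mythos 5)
Your proof is correct, and it reaches the conclusion by a genuinely different and simpler route than the paper's. The paper disposes of this corollary in one line, as a particular case of Theorem~\ref{thmTransaspectedges}, and that theorem's proof is a counting argument: the intermediate composite vertices whose time coordinate equals $t_i$ or $t_j$ number at most $2\left|\mathrm{V}(\mathrm{G_t})\right|$ (in the TVG specialisation), and the displayed $\tfrac{1}{8}$-hypothesis is invoked precisely to make this count eventually smaller than the $\frac{\left|\mathbb{V}(\mathrm{G_t})\right|}{4}-\mathbf{o}\left(\left|\mathbb{V}(\mathrm{G_t})\right|\right)$ disjoint length-$2$ paths guaranteed by Corollary~\ref{corC-randomMAGs}, so that some path avoids those intermediates. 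Your window-disjointness observation---that an edge $(a,t_p,b,t_q)$ fails to be transtemporal exactly when $q\in\{p-1,p,p+1\}$, and that $j>i+2$ makes the windows $\{i-1,i,i+1\}$ and $\{j-1,j,j+1\}$ disjoint---shows that \emph{every} length-$2$ path (and the direct edge, if present) already contains a transtemporal edge, so no counting of bad intermediates is needed at all: mere existence of a single length-$2$ path suffices, which follows from the disjoint-path count (indeed from composite diameter $2$ alone) once the network is large enough. What your route buys is economy plus a sharper structural insight: it reveals that the quantitative $\tfrac{1}{8}$-hypothesis is essentially superfluous in the regime $j>i+2$, being genuinely needed only when the two windows can overlap (e.g.\ the borderline $j=i+2$, which the statement excludes); in fact your observation would streamline the proof of Theorem~\ref{thmTransaspectedges} itself, since that theorem also assumes $j>i+2$. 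What the paper's counting buys is the stronger mechanism---a path whose intermediate avoids the endpoint levels entirely---which is what survives when the windows are allowed to meet, and this is why the paper phrases everything at the level of the general multidimensional theorem. Your algebraic reading of the hypothesis as $2\left|\mathrm{V}(\mathrm{G_t})\right|/N=\mathbf{o}(1)$, with $N$ the disjoint-path count, is also correct and consistent with how the paper deploys it.
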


In fact, Corollary~\ref{corTranstemporaledges} holds as a particular case of undirected high-order networks (i.e., undirected node-aligned general multidimensional networks), as we will demonstrate below. 
The multilayered case with just one additional aspect besides the set of vertices is totally analogous to Corollary~\ref{corTranstemporaledges}. 
For the multidimensional case, we will have that the first aspect still is the set of vertices. 
The second aspect in turn may be the set $ \mathrm{ T }( \mathscr{ G }_c ) =  \mathscr{A}( \mathscr{ G }_c  )[2]  $ of time instants or it may be the first layer type $ \mathrm{ L_1 }( \mathscr{ G }_c ) = \mathscr{A}( \mathscr{ G }_c  )[2] $). 
The further aspects are any other layer type $ \mathrm{ L_k }( \mathscr{ G }_c ) = \mathscr{A}( \mathscr{ G }_c  )[k+1] $, where $ k \leq \left| \mathscr{A}( \mathscr{ G }_c  ) \right| - 1 $, or any other node dimension.
Note that, unlike in \cite{Kivela2014}, we refer to each element of $ \mathrm{ L_k }( \mathscr{ G }_c ) $ as a \emph{layer} and to each $ \left(\alpha_1 , \dots , \alpha_k\right) \in \mathrm{ L_1 }( \mathscr{ G }_c ) \times \cdots \times \mathrm{ L_k }( \mathscr{ G }_c ) $ as a \emph{layer tuple} (or composite layer), instead of, respectively, a elementary layer and a layer.
Moreover, we refer to each set $ \mathrm{ L_k }( \mathscr{ G }_c ) $ as a \emph{layer type} and to each arbitrary set $ \mathrm{ L_i }( \mathscr{ G }_c ) \times \cdots \times \mathrm{ L_j }( \mathscr{ G }_c ) $ as a \emph{multilayer type}.


By noting that Corollary~\ref{corC-randomMAGs} applies to simple MAGs with order $ p \geq 2 $ (as proved in \cite{Abrahao2018darxivandreport}), Corollary~\ref{corTranstemporaledges} becomes indeed a particular case of:

\begin{theorem}\label{thmTransaspectedges}
	Let $ \mathscr{ G }_c $ be any large enough $ \mathbf{O}( \log_2( \left| \mathbb{V}( \mathscr{ G }_c ) \right| ) ) $-C-random simple MAG with order $ p \geq 2 $ that
	satisfies Corollary~ \ref{corC-randomMAGs} with $ \delta\left( \left| \mathbb{V}( \mathscr{ G }_c ) \right| \right) = \mathbf{O}( \log_2( \left| \mathbb{V}( \mathscr{ G }_c ) \right| ) ) $
	such that 
	\[
	\left(
	\frac{ 1 }{ 16 } - \frac{ \mathbf{o}( \left| \mathbb{ V }( \mathscr{ G }_c  ) \right| ) }{ 4 \, \left| \mathbb{ V }( \mathscr{ G }_c  ) \right| }
	\right)^{-1}
	=
	\mathbf{o}
	\left( 	
	\frac{\left| \mathbb{ V }( \mathscr{ G }_c  ) \right| }{ \left| \mathrm{ V }( \mathscr{ G }_c ) \right| \, 
		\bigtimes\limits_{ 2 \leq h \leq p , h \neq k \leq p} \left| \mathscr{A}( \mathscr{ G }_c  )[h]  \right|
	}
	\right) 
	\text{,} 
	\]
	where $ 2 \leq k \leq p $.
	Then, between every pair of composite vertices 
	\[ ( u ,  \dots  , x_{ki} , \dots ,  x_{ps} ) \] 
	and 
	\[ ( v , \dots , x_{kj} \dots , x_{ps'}) \] 
	in $  \mathbb{ V }( \mathscr{ G }_c  ) $  with $ j > i+2 $, there is: at least one crosslayer edge $ e \in \mathscr{ E }( \mathscr{ G }_c ) $, if $ \mathrm{ L_{ k - 1 } }( \mathscr{ G }_c )  = \mathscr{A}( \mathscr{ G }_c  )[k] $; at least one transtemporal edge $ e \in \mathscr{ E }( \mathscr{ G }_c ) $, if $ \mathrm{ T }( \mathscr{ G }_c )   = \mathscr{A}( \mathscr{ G }_c  )[k] $; or at least one non-sequential interdimensional edge $ e \in \mathscr{ E }( \mathscr{ G }_c ) $, if $ \mathscr{A}( \mathscr{ G }_c  )[k] $ corresponds to any arbitrary node dimension.
	
\begin{proof}[Proof]
	Given a proper interpretation of the $k$-th aspect, both transtemporal edges and crosslayer edges are particular cases of non-sequential interdimensional edges.  
	Thus, we will prove only the general case.
	First, if 
	\[ ( u ,  \dots  , x_{ki} , \dots ,  x_{ps} , v , \dots , x_{kj} \dots , x_{ps'} ) \in \mathscr{ E }(  \mathscr{ G }_c  ) \text{ ,} \]
	then it immediately satisfies the definition of non-sequential interdimensional edge. 
	Thus, it suffices to investigate the cases in which there is a $ ( v' ,  \dots  , x_{kz} , \dots ,  x_{ps''} ) \in \mathbb{ V }( \mathscr{ G }_c  ) $ with $ x_{kz} \in \mathscr{A}( \mathscr{ G }_c  )[k]  $ such that 
	\[ ( u ,  \dots  , x_{ki} , \dots ,  x_{ps} , v' ,  \dots  , x_{kz} , \dots ,  x_{ps''} )  \in \mathscr{ E }(  \mathscr{ G }_c  )   \]
	or
	\[ ( v' ,  \dots  , x_{kz} , \dots ,  x_{ps''} , v , \dots , x_{kj} \dots , x_{ps'} ) \in \mathscr{ E }(  \mathscr{ G }_c  ) \text{ .} \]
	From Corollary~\ref{corC-randomMAGs}, we have that, for every pair of composite vertices \\ $ ( u ,  \dots  , x_{ki} , \dots ,  x_{ps}  ) $ and $ ( v , \dots , x_{kj} \dots , x_{ps'} ) $, there are $ \frac{\left| \mathbb{ V }( \mathscr{ G }_c  ) \right| }{4} \pm \mathbf{o}(\left| \mathbb{ V }( \mathscr{ G }_c  ) \right| ) $ disjoint paths of length 2 between $ ( u ,  \dots  , x_{ki} , \dots ,  x_{ps}  ) $ and $ ( v , \dots , x_{kj} \dots , x_{ps'} ) $.
	Now, notice that 
	\[
	\left| \mathscr{A}( \mathscr{ G }_c  )[k]  \right|
	=
	\frac{\left| \mathbb{ V }( \mathscr{ G }_c  ) \right| }{ \left| \mathrm{ V }( \mathscr{ G }_c ) \right| \, 
		\bigtimes\limits_{ 2 \leq h \leq p , h \neq k \leq p} \left| \mathscr{A}( \mathscr{ G }_c  )[h]  \right|
	} 
	\text{ .}
	\]
	But, since $ \mathscr{ G }_c $ can have arbitrarily large sets $ \mathbb{ V }( \mathscr{ G }_c  ) $ and
	\[
	\left(
	\frac{ 1 }{ 16 } - \frac{ \mathbf{o}( \left| \mathbb{ V }( \mathscr{ G }_c  ) \right| ) }{ 4 \, \left| \mathbb{ V }( \mathscr{ G }_c  ) \right| }
	\right)^{-1}
	=
	\mathbf{o}
	\left( 	
	\frac{\left| \mathbb{ V }( \mathscr{ G }_c  ) \right| }{ \left| \mathrm{ V }( \mathscr{ G }_c ) \right| \, 
		\bigtimes\limits_{ 2 \leq h \leq p , h \neq k \leq p} \left| \mathscr{A}( \mathscr{ G }_c  )[h]  \right|
	}
	\right) 
	\text{,} 
	\]
	then the number of possible distinct composite vertices $ ( v' ,  \dots  , x_{kz} , \dots ,  x_{ps''} ) $ with $ z = i \pm 1 $ or $ z = j \pm 1 $ will be always smaller than or equal to
	\begin{equation}\label{eqTranstemporalMain}
		\begin{aligned}
			& \lim\limits_{ \left| \mathbb{ V }( \mathscr{ G }_c  ) \right| \to \infty } 
			4 \, \left( \left| \mathrm{ V }( \mathscr{ G }_c ) \right| \, 
			\bigtimes\limits_{ 4 \leq h \leq p , h \neq k \leq p} \left| \mathscr{A}( \mathscr{ G }_c  )[h]  \right| \right) = \\
			= \; & \lim\limits_{ \left| \mathbb{ V }( \mathscr{ G }_c  ) \right| \to \infty } 
			4 \, 
			\left| \mathbb{ V }( \mathscr{ G }_c  ) \right| \, 
			\left( \frac{ \left| \mathbb{ V }( \mathscr{ G }_c  ) \right| }{ \left| \mathrm{ V }( \mathscr{ G }_c ) \right| \, \bigtimes\limits_{ 2 \leq h \leq p , h \neq k \leq p} \left| \mathscr{A}( \mathscr{ G }_c  )[h]  \right|  } \right)^{ -1 }
			<< \\
			<< \; & 
			\lim\limits_{ \left| \mathbb{ V }( \mathscr{ G }_c  ) \right| \to \infty } 
			4 \, 
			\left| \mathbb{ V }( \mathscr{ G }_c  ) \right|\,
			\left(
			\frac{ 1 }{ 16 } - \frac{ \mathbf{o}( \left| \mathbb{ V }( \mathscr{ G }_c  ) \right| ) }{ 4 \, \left| \mathbb{ V }( \mathscr{ G }_c  ) \right| }
			\right) = \\
			= \; & \,
			\lim\limits_{ \left| \mathbb{ V }( \mathscr{ G }_c  ) \right| \to \infty }
			\frac{ \left| \mathbb{ V }( \mathscr{ G }_c  ) \right| }{4}
			-
			\mathbf{o}( \left| \mathbb{ V }( \mathscr{ G }_c  ) \right| )
		\end{aligned}
	\end{equation}
	From Equation~\eqref{eqTranstemporalMain}, one will have it that the number of sequential interdimensional edges will eventually become strictly smaller than the number of distinct composite vertices connecting $ ( u ,  \dots  , x_{ki} , \dots ,  x_{ps}  ) $ and $ ( v , \dots , x_{kj} \dots , x_{ps'} ) $. 
	Therefore, 
	there will be at least one composite vertex $ ( v' ,  \dots  , x_{kz} , \dots ,  x_{ps''} ) $ 
	connected by a non-sequential interdimensional edge.
\end{proof}

\end{theorem}

\end{document}